\title{Spatially-Coupled LDPC Codes for Decode-and-Forward Relaying of Two Correlated Sources over the BEC}
\author{}
\newtheorem{theorem}{Theorem}
\newtheorem{lemma}{Lemma}
\newtheorem{corollary}{Corollary}
\theoremstyle{definition}
\newcommand{\src}{\mathrm{s}}
\newcommand{\rel}{\mathrm{r}}
\newcommand{\dst}{\mathrm{d}}
\newcommand{\kr}{k_\rel}
\newcommand{\krone}{k_\rel^1}
\newcommand{\krtwo}{k_\rel^2}
\newcommand{\kri}{k_\rel^i}
\newcommand{\kt}{\bar{k}}
\newcommand{\none}{n_1}
\newcommand{\ntwo}{n_2}
\renewcommand{\ni}{n_i}
\newcommand{\nr}{n_\rel}
\newcommand{\np}{\tilde{n}}
\newcommand{\N}{N}
\newcommand{\thetai}{\theta_j}
\newcommand{\x}{\mathbf{x}}
\newcommand{\xone}{\x^{(1)}}
\newcommand{\xtwo}{\x^{(2)}}
\renewcommand{\xi}{\x^{(i)}}
\renewcommand{\u}{\mathbf{u}}
\newcommand{\ui}{\u^{(i)}}
\newcommand{\s}{\mathbf{s}}
\renewcommand{\H}{\mathbf{H}}
\newcommand{\synd}{\text{synd}}
\newcommand{\sys}{\text{sys}}
\newcommand{\Hz}{\H_\mathcal{Z}}
\newcommand{\Hsynd}{\H_\synd}
\newcommand{\Hsyndexp}{\left[\begin{array}{cc}\Hsynd^1 & \Hsynd^2\end{array}\right]}
\newcommand{\Rone}{R_1}
\newcommand{\Rtwo}{R_2}
\newcommand{\Ri}{R_i}
\newcommand{\Rr}{R_\rel}
\newcommand{\Ronep}{\tilde{R}_1}
\newcommand{\Rtwop}{\tilde{R}_2}
\newcommand{\Rip}{\tilde{R}_i}
\newcommand{\Roneeff}{R'_1}
\newcommand{\Rtwoeff}{R'_2}
\newcommand{\Reff}{R'}
\newcommand{\Rieff}{R'_i}
\newcommand{\Rsys}{R_{\text{sum}}}
\newcommand{\bl}{\text{bl}}
\newcommand{\Rbli}{R_\bl^i}
\newcommand{\Rblcpone}{\tilde{R}_{\bl,c}^1}
\newcommand{\Rblcptwo}{\tilde{R}_{\bl,c}^2}
\newcommand{\Rblcip}{\tilde{R}_{\bl,c}^i}
\newcommand{\Rbl}{R_\bl}
\newcommand{\Rsone}{R_\synd^1}
\newcommand{\Rstwo}{R_\synd^2}
\newcommand{\Rsi}{R_\synd^i}
\newcommand{\Rsrcone}{R_{\src1}}
\newcommand{\Rsrctwo}{R_{\src2}}
\newcommand{\Rsrci}{R_{\src i}}
\newcommand{\Rcone}{R_{\mathrm{c}1}}
\newcommand{\Rctwo}{R_{\mathrm{c}2}}
\newcommand{\Rci}{R_{\mathrm{c}i}}
\newcommand{\Rcip}{\tilde{R}_{\mathrm{c}i}}
\newcommand{\Rcpone}{\tilde{R}_{\mathrm{c}1}}
\newcommand{\Rcptwo}{\tilde{R}_{\mathrm{c}2}}
\newcommand{\Rconeeff}{R'_{\mathrm{c}1}}
\newcommand{\Rctwoeff}{R'_{\mathrm{c}2}}
\newcommand{\Rcieff}{R'_{\mathrm{c}i}}
\newcommand{\nCheck}{N_{\text{C}}}
\newcommand{\nCheckS}{N_{\text{C,synd}}}
\newcommand{\nVar}{N_{\text{V}}}
\newcommand{\epsoner}{\epsilon_{\src_1\rel}}
\newcommand{\epstwor}{\epsilon_{\src_2\rel}}
\newcommand{\epsir}{\epsilon_{\src_i\rel}}
\newcommand{\epsr}{\epsilon_{\src\rel}}
\newcommand{\epsoned}{\epsilon_{\src_1\dst}}
\newcommand{\epstwod}{\epsilon_{\src_2\dst}}
\newcommand{\epsid}{\epsilon_{\src_i\dst}}
\newcommand{\epsd}{\epsilon_{\src\dst}}
\newcommand{\eprd}{\epsilon_{\rel\dst}}
\newcommand{\mapThresh}{\epsilon^{\text{MAP}}}
\newcommand{\bpThresh}{\epsilon^{\text{BP}}}
\newcommand{\shanLim}{\epsilon^{\text{Sh}}}
\newcommand{\Csoner}{C_{\src_1\rel}}
\newcommand{\Cstwor}{C_{\src_2\rel}}
\newcommand{\Csir}{C_{\src_i\rel}}
\newcommand{\Csr}{C_{\src\rel}}
\newcommand{\Csoned}{C_{\src_1\dst}}
\newcommand{\Cstwod}{C_{\src_2\dst}}
\newcommand{\Csid}{C_{\src_i\dst}}
\newcommand{\Csd}{C_{\src\dst}}
\newcommand{\Crd}{C_{\rel\dst}}
\newcommand{\Ccode}{\mathcal{C}}
\newcommand{\Cone}{\mathcal{C}_1}
\newcommand{\Ctwo}{\mathcal{C}_2}
\newcommand{\Ci}{\mathcal{C}_i}
\newcommand{\Csone}{\mathcal{C}_\synd^1}
\newcommand{\Cstwo}{\mathcal{C}_\synd^2}
\newcommand{\Csi}{\mathcal{C}_\synd^i}
\newcommand{\Cbl}{\mathcal{C}_\text{bl}}
\newcommand{\Mone}{M_1}
\newcommand{\Mtwo}{M_2}
\newcommand{\Mi}{M_i}
\newcommand{\lone}{l_1}
\newcommand{\ltwo}{l_2}
\newcommand{\rone}{r_1}
\newcommand{\rtwo}{r_2}
\newcommand{\lsone}{l_\synd^1}
\newcommand{\lstwo}{l_\synd^2}
\newcommand{\lsi}{l_\synd^i}
\newcommand{\rsone}{r_\synd^1}
\newcommand{\rstwo}{r_\synd^2}
\newcommand{\rsi}{r_\synd^i}
\newcommand{\ls}{l_\synd}
\newcommand{\rs}{r_\synd}
\newcommand{\corr}{\text{corr}}
\newcommand{\pit}[3]{p_{#1}^{(#2,#3)}}
\newcommand{\pits}[3]{p_{\synd,#1}^{(#2,#3)}}
\newcommand{\qit}[3]{q_{#1}^{(#2,#3)}}
\newcommand{\qits}[3]{q_{\synd,#1}^{(#2,#3)}}
\newcommand{\pitc}[3]{p_{\corr,#1}^{(#2,#3)}}
\newcommand{\otoprule}{\midrule[\heavyrulewidth]}
\newcommand{\nCheckSEff}{N_{\text{C,synd}}^{\text{eff}}}
\newcommand{\T}{^{\mathsf{T}}}
\begin{document}

\author{Stefan~Schwandter,
        Alexandre~Graell~i~Amat,~\IEEEmembership{Senior Member,~IEEE,}
        and~Gerald~Matz,~\IEEEmembership{Senior~Member,~IEEE}
\thanks{S. Schwandter is with Grabner Instruments, Ametek Inc., Dr.-Otto-Neurath-Gasse 1, A-1220 Vienna, Austria (e-mail: s.schwandter@me.com).}
\thanks{A. Graell i Amat is with the Department of Signals and Systems, Chalmers University of Technology, SE-412 96 Gothenburg, Sweden (e-mail: alexandre.graell@chalmers.se).}
\thanks{G. Matz is with the Institute of Telecommunications, Vienna University of Technology, Gusshausstrasse 25/389, A-1040 Vienna, Austria (e-mail: gmatz@nt.tuwien.ac.at).}
\thanks{Part of this work has been previously presented at the IEEE 7th International Symposium on Turbo Codes \& Iterative Information Processing (ISTC) 2012.
}%
\thanks{The work of S.~Schwandter and G.~Matz was funded by FWF Grant S10606 and WWTF Grant ICT08-44. A.~Graell i Amat was supported by the Swedish
Research Council under grant \#2011-5961, and by the Swedish Foundation for Strategic Research (SSF) under the Gustaf Dal\'en project IMF11-0077.}
}

\maketitle
\begin{abstract}
We present a decode-and-forward transmission scheme based on spatially-coupled low-density parity-check (SC-LDPC) codes for a network consisting of two (possibly correlated) sources, one relay, and one destination. The links between the nodes are modeled as binary erasure channels. Joint source-channel coding with joint channel decoding is used to exploit the correlation. The relay performs network coding. We derive analytical bounds on the achievable rates for the binary erasure time-division multiple-access relay channel with correlated sources. We then design bilayer SC-LDPC codes and analyze their asymptotic performance for this scenario. We prove analytically that the proposed coding scheme achieves the theoretical limit for symmetric channel conditions and uncorrelated sources. Using density evolution, we furthermore demonstrate that our scheme approaches  the theoretical limit also for non-symmetric channel conditions and when the sources are correlated, and we observe the threshold saturation effect that is typical for spatially-coupled systems. Finally, we give simulation results for large block lengths, which validate the DE analysis.
\end{abstract}
\begin{IEEEkeywords}
Binary erasure channel, cooperative communications, correlated sources, decode-and-forward, distributed coding, relay channel, spatially-coupled low-density parity-check codes, threshold saturation.
\end{IEEEkeywords}

\section{Introduction}

The three-node relay channel was introduced by van der Meulen in \cite{van1971three} and the first capacity results were presented in \cite{Cover79}. Recent years have seen a vast amount of research on relaying, both in the information theory and coding communities. While the capacity for the general relay channel is still unknown, a number of relaying strategies have been devised, which establish achievable rates. 
One of the most prominent examples is the \emph{decode-and-forward} (DF) relaying scheme, introduced in \cite{Cover79}. 
With DF, the relay fully decodes the source data 
and provides a re-encoded copy of the source message to the destination. Apart from information theoretical considerations, there has also been a huge interest in designing practical relaying schemes. 
Several papers have considered practical implementations of DF based on convolutional codes \cite{Stefanov04}, turbo codes \cite{Zhao03} or low-density parity-check (LDPC) codes 
\cite{Chakrabarti07}. In \cite{Razaghi2007}, so-called \emph{bilayer} (BL-) LDPC codes were introduced and were shown to closely approach the theoretical DF rate.
Recently, a combination of the bilayer structure with spatially-coupled LDPC (SC-LDPC) codes \cite{felstrom1999,Lentmaier10,Kud11} was investigated in \cite{Si2011} and it was shown that BL-SC-LDPC codes can actually achieve the Shannon limit of a DF relay system with orthogonal binary erasure channel (BEC) links. As the SC-LDPC code ensembles are regular, the design complexity is very low compared to schemes based on irregular LDPC code ensembles, which require extensive optimization.

Despite the complexity that is inherent already in the three-node relay channel, more complicated networks have been investigated as well. For example, in a practical system, the need may arise that one relay assists more than just one single source.
Such a system, in which multiple sources share one relay, is modeled by the multiple access relay channel (MARC). Capacity results for the MARC with independent sources were given in \cite{kramer2004information}. Information theoretical bounds for the MARC with correlated sources have been recently given in\cite{Murin2011}.
A common assumption is that the transmissions in the relay system are orthogonalized using time division multiple-access (TDMA). For the time division MARC (TD-MARC), several coding schemes  have been proposed based on regular LDPC codes \cite{hausliterative}, irregular LDPC codes \cite{linetwork}, and serially-concatenated codes \cite{youssef2011distributed}. The authors recently proposed a scheme based on SC-LPDC codes \cite{schwandter12}. However, none of these works considered correlated sources.

The task of efficiently transmitting the correlated data from two or more source nodes to one or more destination nodes in a communications network is a topic currently undergoing high research activity. This so-called ``sensor reach-back problem'' \cite{Barros:2002fk,barros2006network} occurs, e.g., in wireless sensor networks \cite{akyildiz2002survey}, where measurements of neighboring sensors can be spatially correlated. In the simplest case, two correlated sources transmit their data directly to one common destination. For independent discrete memoryless channels, it has been shown that the separation of source and channel coding is asymptotically optimum in this (and more general) scenario(s) \cite{Barros:2002fk,barros2006network}. Therefore, the achievable rates for such a system can be derived assuming this separation.
However, for designing practical schemes, many works take a joint source-channel coding (JSCC) approach, where the uncompressed source data is directly encoded with channel codes, and the correlation is exploited at the receiver using joint channel decoding (JCD). There are several reasons for not separating source and channel coding in a practical system: First, the design of practical source codes for correlated sources is an open problem \cite{zhao2003lossless}. Secondly, errors introduced by the channel decoder could be catastrophic for the source decoder. Thirdly, the implementation of JCD in form of an iterative decoder based on a factor graph \cite{Kschischang:2001p6575} is conceptually simple and appealing, and it allows analysis through the density evolution (DE) technique \cite{richardson2008modern}.
%
Early works on practical transmission schemes for two correlated sources were based on turbo codes and low-density generator matrix (LDGM) codes\cite{daneshgaran2006ldpc,zhong2005ldgm,garcia2007turbo}. More recent works, e.g., \cite{abrardo2009optimizing,martalo2010density}, used DE and extrinsic information transfer (EXIT) chart techniques to optimize irregular LDPC codes. Recently, a scheme based on regular SC-LDPC codes has been shown to achieve a performance close to the theoretical limits \cite{yedla2011universality}. The results suggest that this near-optimum performance is due to the threshold saturation effect exhibited by SC-LDPC codes \cite{Kud11}. 




In this paper, we consider a system in which the data of two (possibly correlated) sources is transmitted to a common destination, with the help of a relay. The links between the nodes are modeled as BECs and they are orthogonalized using TDMA.
The main contributions of this paper are the following: We derive an achievable rate for the TD-MARC with correlated sources.
Since the links in our system are independent, the maximum rate is achieved by separate source and channel coding \cite{barros2006network}. We then propose a two-user bilayer SC-LDPC coded relaying scheme for this scenario and analyze its behavior for the case of asymptotically large block length. The system uses JSCC for the transmission to both the relay as well as the destination nodes. In addition, the relay implicitly uses network coding to combine the sources data before forwarding it to the destination.
Since the SC-LDPC codes used are regular, their design is simple, does not involve the optimization of the degree distributions, and simplifies to choosing appropriate node degrees of the component codes for given link qualities. The factor-graph-based design of the joint source-channel-network decoder at the destination node incorporates aspects of the decoders for correlated sources described in \cite{yedla2009can} and extends the two-user bilayer relaying scheme proposed by the authors in \cite{schwandter12} to the scenario with correlated sources. We give DE results that show that the performance of the proposed relaying scheme is very close to the theoretical limit. We also show that the phenomenon of threshold saturation occurs, which is responsible for the outstanding performance. Finally, for the special case of uncorrelated sources and symmetric link capacities, we prove that our scheme achieves the maximum rate achievable by decode-and-forward relaying. Simulation results for large block lengths are also given.

The remainder of the paper is structured as follows. Section~\ref{sec:sysmodel} introduces the system model. In Section~\ref{sec:theoretical_limits}, we revisit the achievable DF rates of the two-source TD-MARC and extend the analysis to the case where the sources are correlated. In Section~\ref{sec:scldpccodes}, we give a brief overview of SC-LDPC codes. Our main contribution, the two-user bilayer SC-LDPC coded relaying scheme, is presented in Section~\ref{sec:twoUserBl}. DE for the proposed bilayer SC-LDPC codes is discussed in Section~\ref{sec:decStructDe}. A proof that the proposed coding schemes achieves the DF TD-MARC theoretical limit for symmetric channel conditions and uncorrelated sources is also given in this section. In Section~\ref{sec:achChanParam}, we derive the achievable region of channel parameters, which serves as a theoretical benchmark to which the proposed bilayer codes must be compared. DE results are presented in Section~\ref{sec:numresults}, which demonstrate that our scheme performs very closely to the theoretical limit also for non-symmetric channels and correlated sources, and show the threshold saturation phenomenon. Finally, Section~\ref{sec:conclusions} concludes the paper.

\section{System model}
\label{sec:sysmodel}

\begin{figure}
\begin{center}
\includegraphics[]{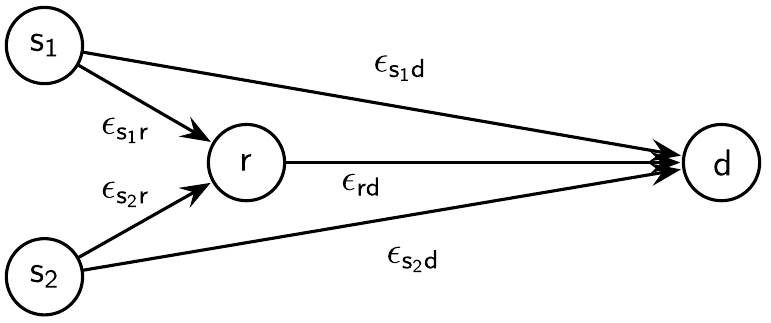}
\caption{Two-user relay network in which all links are modeled as binary erasure channels (links are labeled with erasure probabilities).}
\label{fig:sys_model}
\end{center}
\vspace{-3ex}
\end{figure}
We consider the relay network depicted in Fig.~\ref{fig:sys_model}, consisting of four nodes: Two sources, $\src_1$ and $\src_2$, transmit correlated data to a single destination $\dst$ with the help of a relay $\rel$. The links between the nodes are modeled as BECs with erasure probabilities $\epsoner$, $\epstwor$, $\epsoned$, $\epstwod$, and $\eprd$, where the first subscript denotes the transmitter, and the second subscript the receiver in each link.  One transmission block is split into three phases in order to orthogonalize the transmission links. In each phase only one node transmits and the others listen. 
In phase $i$ ($i=1,2$), source 
$\src_i$ 
transmits a length-$\ni$ codeword $\x^{(i)}\triangleq\big(x_1^{(i)},\ldots,x_{n_i}^{(i)}\big){\T}$, obtained by encoding $k_i$ information bits $\u^{(i)}\triangleq\big(u_1^{(i)},\ldots,u_{k_i}^{(i)}\big){\T}$ 
with a code of rate $\Ri \triangleq k_i/\ni$. 
The relay receives $\x^{(i)}$
over the  $\src_i$-$\rel$ link, which has 
capacity $\Csir=1-\epsir$.
The destination receives the same codewords over
the $\src_i$-$\dst$ links with capacities $\Csid=1-\epsid$. The relay decodes both transmissions, combines the decoded sources data as described below, and encodes the resulting $\kr$ bits into a codeword 
of length $\nr$ using a code of rate $\Rr$. 
In the third phase, this codeword is
forwarded to the destination over the $\rel$-$\dst$ link, with capacity $\Crd=1-\eprd$. The whole transmission block consists of $N\triangleq\none+\ntwo+\nr$ bits. The length of each phase relative to the total block length is $\thetai\triangleq n_j/\N$, $j=(1,2,\rel)$.
The \emph{effective}\footnote{Throughout the paper we use a prime superscript to indicate an effective transmission rate.} transmission rates of the sources are defined as
\begin{equation}
\Rieff\triangleq\frac{k_i}{N} = \frac{k_i \theta_i}{n_i}=\theta_i R_i\quad \text{source bits/transmission block}\label{eq:effvsrealrates}.
\end{equation}
The sum rate of the system is $\Rsys\triangleq \frac{k_1+k_2}{N}=\Roneeff+\Rtwoeff$.



In the system described above, source coding is not considered. For deriving the maximum achievable rate for this system, however, we consider that the sources data is compressed before transmission: Source $\src_i$ compresses $k_i$ information bits with a source coding rate 
\begin{equation*}
\Rsrci \triangleq\frac{\kt_i}{k_i},
\end{equation*}
resulting in $\kt_i$ compressed bits. Then, it uses a channel code of rate $\Rci\triangleq\frac{\kt_i}{n_i}$ to encode the compressed bits into $n_i$ code bits. The overall transmission rates including source and channel coding are therefore
\begin{equation}
\Ri = \frac{k_i}{n_i} = \frac{k_i}{\kt_i}\frac{\kt_i}{n_i}=\frac{\Rci}{\Rsrci} \quad \text{source bits/channel use}.\label{srcCodingRate}
\end{equation}
The effective channel code rates are defined as
\begin{equation}
\Rcieff\triangleq \frac{\kt_i}{N} = \frac{\kt_i \theta_i}{n_i} = \theta_i \Rci\label{eq:effChanCodeRates}.
\end{equation}

Combining \eqref{eq:effvsrealrates}, \eqref{srcCodingRate}, and \eqref{eq:effChanCodeRates} results in
\begin{equation}
\Rieff= \frac{\Rcieff}{\Rsrci}.\label{ReffVsRcieff}
\end{equation}

The correlation between the sources is modeled in the following way \cite{yedla2009can}:
Let $Z_n$ be a binary random variable (RV) in a length-$k$ sequence $\mathbf{Z}$ of independent and identically distributed
RVs with $\Pr(Z_n=1)=p$. The source bits $U_n^{(1)}$ and $U_n^{(2)}$ have the following relationship,
\begin{equation*}
\left(U^{(1)}_n,U^{(2)}_n\right) = \begin{cases}\text{independent Bernoulli-}\frac{1}{2} \text{ RVs, if }Z_n=0\\ 
                         \text{same Bernoulli-}\frac{1}{2} \text{ RV } U_i\text{, if }Z_n=1.
            \end{cases}
\end{equation*}
It is assumed that the decoder knows the realization $\mathbf{z}$ of $\mathbf{Z}$ \cite{yedla2009can}.
Let $\mathbf{U}^{(i)}=(U^{(i)}_1,\ldots,U^{(i)}_k){\T}$. The joint entropy of the two source bit sequences $\mathbf{U}^{(1)}$ and $\mathbf{U}^{(2)}$ is (with slight abuse of notation for brevity) $H(U_1,U_2)=2-p$, and the conditional entropies are $H(U_1|U_2)=H(U_2|U_1)=1-p$.

\section{Theoretical limits}
\label{sec:theoretical_limits}

In this section we derive the achievable DF rates for the TD-MARC described in the previous section, when the sources are correlated. Since the derivations are based on the achievable DF rates for uncorrelated sources, we first revisit those for clarity purposes.

\subsection{Uncorrelated sources}
 
The achievable DF transmission rates for a TD-MARC with independent sources are given by \cite{Hausl2008}
\begin{align}
\Rconeeff &\le \theta_1 \Csoner \label{eq:sr1} \\
\Rctwoeff &\le \theta_2 \Cstwor \label{eq:sr2} \\
\Rconeeff &\le \theta_1 \Csoned + \theta_r \Crd \label{eq:srd1}\\
\Rctwoeff &\le \theta_2 \Cstwod + \theta_r \Crd \label{eq:srd2}\\
\Rconeeff + \Rctwoeff &\le \theta_1 \Csoned + \theta_2 \Cstwod + \theta_r\Crd\label{eq:sumConst}.
\end{align}
Under the assumptions $\Csoner\ge\Csoned$, $\Cstwor\ge\Cstwod$, $\Crd\ge\Csoned$ and $\Crd\ge\Cstwod$, 
the time allocation leading to the maximum achievable effective rates is \cite{Hausl2008}
\begin{align}
\label{eq:theta1opt}
\theta_1^* &= \frac{\Crd}{(1+\sigma \kappa) \Crd + (1+ \sigma) \Csoner- \Csoned - \sigma\kappa \Cstwod}\\
\label{eq:theta2opt}
\theta_2^* &= \kappa \theta_1^*, \\
\theta_r^* &=1-\theta_1^*-\theta_2^*,
\end{align}
where $\kappa \triangleq \Csoner/\Cstwor$ and $\sigma\triangleq\Rctwoeff/\Rconeeff$. 
This time allocation results in the effective rates
$
\Rconeeff=\frac{\Csoner\Crd}{(1+\sigma \kappa) \Crd + (1+ \sigma) \Csoner- \Csoned - \sigma\kappa \Cstwod}
$ 
and
$
\Rctwoeff=\sigma\Rconeeff
$. 
Note that in the uncorrelated case, the effective channel coding rates are equal to the effective transmission rates, $\Rieff=\Rcieff$.


\subsection{Correlated sources}
\label{thlim_corr}
For the transmission of correlated sources to a common destination over independent discrete memoryless channels, it was shown in \cite{barros2006network} that it is optimum to consider source coding and channel coding separately (for infinitely long blocks): First, the source data is compressed up to the Slepian-Wolf limit \cite{Slepian73}, then the compressed data is channel encoded with capacity-achieving codes and transmitted over the channels. %
We apply this strategy to obtain the achievable rates for the transmission of correlated sources over the TD-MARC. 
The achievable source coding rates given by Slepian and Wolf \cite{Slepian73} are
\begin{align}
\Rsrcone &\ge H(U_1|U_2) \label{eq:sw1}\\
\Rsrctwo &\ge H(U_2|U_1) \label{eq:sw2}\\
\Rsrcone + \Rsrctwo &\ge H(U_1,U_2) \label{eq:sw3},
\end{align}

The part of the region given by
\eqref{eq:sw1}-\eqref{eq:sw3}
that gives the lowest possible rates (i.e., maximal compression of the sources) is the one given by
\begin{equation}
H(U_1|U_2) \le \Rsrcone \le H(U_1) \label{eq:rs2ofrs1},\quad
\Rsrctwo(\Rsrcone) = H(U_1,U_2) - \Rsrcone.
\end{equation}
We will implicitly assume \eqref{eq:rs2ofrs1} whenever we write $\Rsrctwo$ from now on.

In the rest of the paper, we assume $k_1=k_2\triangleq k$, which entails $\Reff_1=\Reff_2\triangleq\Reff$, for simplicity. From the bounds on the achievable effective transmission rates for the TD-MARC with uncorrelated sources \eqref{eq:sr1}-\eqref{eq:sumConst}, considering the fact that we compress the sources data before transmission over the links \eqref{ReffVsRcieff}, we obtain the following bounds for the achievable effective transmission rate for the TD-MARC with correlated sources,
\begin{align}
\Reff& \le \frac{1}{\Rsrcone}\theta_1 \Csoner   \triangleq f_1\label{eq:sr1x}\\
\Reff& \le \frac{1}{\Rsrctwo} \theta_2 \Cstwor  \triangleq f_2\label{eq:sr2x} \\
\Reff& \le \frac{1}{\Rsrcone} \left(\theta_1 \Csoned + \theta_r \Crd \right)  \triangleq f_3\label{eq:srd1x}\\
\Reff& \le \frac{1}{\Rsrctwo}\left( \theta_2 \Cstwod + \theta_r \Crd \right)
	\triangleq f_4 \label{eq:srd2x}\\
\Reff& \le \frac{1}{H(U_1,U_2)} \left(\theta_1 \Csoned + \theta_2 \Cstwod + \theta_r\Crd\right)   \triangleq f_5\label{eq:sumConstx},
\end{align}
where the source coding rates $\Rsrci$ (with the restrictions in 
\eqref{eq:rs2ofrs1}) and the time allocation parameters $\theta_1$, $\theta_2$ and $\theta_r=1-\theta_1 - \theta_2$ are free parameters that have to be optimized to obtain the maximum achievable rate
\begin{equation*}
\Reff_{\text{max}}\triangleq\max_{\theta_1,\theta_2,\Rsrcone} \Reff.
\end{equation*}
This maximum 
is given in the following theorem.

\begin{theorem} The maximum rate achievable on the TD-MARC with correlated sources is
\begin{equation*}
R_{\text{max}}' = \frac{1}{\Rsrcone^*}\theta_1^* \Csoner,
\end{equation*}
where the optimum time allocation parameters and source coding rates are
\begin{align*}
\theta_1^* &= \frac{\Crd}{\left(1+\kappa'\right) \Crd + \frac{H(U_1,U_2)}{\Rsrcone^*}\Csoner - \Csoned - \kappa'\Cstwod}\label{eq:theta1_opt_thm} \\
\theta_2^* &= \kappa'\theta_1^*\\
\Rsrcone^* &=
\begin{cases}
H(U_1|U_2) &\quad\text{if } \kappa > \nu \\
1 &\quad\text{if } \kappa < \nu \\
\text{arbitrary in }\, [H(U_1|U_2),1] &\quad\text{if } \kappa = \nu,
\end{cases}
\end{align*}
with
\begin{equation*}
\kappa=\frac{\Csoner}{\Cstwor},\quad \kappa'=\kappa\left(\frac{H(U_1,U_2)}{\Rsrcone^*}-1\right), \quad \nu=\frac{\Crd-\Csoned}{\Crd-\Cstwod}.
\end{equation*}
\end{theorem}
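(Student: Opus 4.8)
\emph{Proof approach.} The plan is to solve $\Reff_{\max}=\max_{\theta_1,\theta_2,\Rsrcone}\min(f_1,\dots,f_5)$ in two nested stages: for a fixed source-coding rate $\Rsrcone$ (so that $\Rsrctwo=H(U_1,U_2)-\Rsrcone$ by \eqref{eq:rs2ofrs1}, and $\theta_r=1-\theta_1-\theta_2$), first maximize over the time split $(\theta_1,\theta_2)$, and then maximize the resulting value over $\Rsrcone\in[H(U_1|U_2),H(U_1)]$.

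For the inner stage I would derive a matching upper and lower bound. For the upper bound, \eqref{eq:sr1x} and \eqref{eq:sr2x} force $\theta_1\ge\Rsrcone\Reff/\Csoner$ and $\theta_2\ge\Rsrctwo\Reff/\Cstwor$; since $\Crd\ge\Csoned$ and $\Crd\ge\Cstwod$, the right-hand side of \eqref{eq:sumConstx} equals $\Crd-\theta_1(\Crd-\Csoned)-\theta_2(\Crd-\Cstwod)$ and is decreasing in $\theta_1$ and $\theta_2$, so substituting the two lower bounds into \eqref{eq:sumConstx} gives, for every feasible point,
\[
\Reff \;\le\; \frac{\Csoner\Crd}{H(U_1,U_2)\,\Csoner+\Rsrcone(\Crd-\Csoned)+\kappa\,\Rsrctwo(\Crd-\Cstwod)}.
\]
For the matching lower bound I would take $\theta_1,\theta_2$ equal to those lower bounds (so that $f_1=f_2=\Reff$ and \eqref{eq:sumConstx} holds with equality) and $\theta_r=1-\theta_1-\theta_2$, and check that this is a valid time allocation: $\theta_r\ge0$ and the two remaining constraints \eqref{eq:srd1x}, \eqref{eq:srd2x} reduce, after simplification, precisely to $\Csoner\ge\Csoned$ and $\Cstwor\ge\Cstwod$ and therefore hold. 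Equivalently, one can invoke the uncorrelated result \eqref{eq:theta1opt}--\eqref{eq:theta2opt} with rate ratio $\sigma=\Rsrctwo/\Rsrcone$ (the equal effective rates $\Reff_1=\Reff_2$ pin $\Rconeeff/\Rctwoeff$ to $\Rsrcone/\Rsrctwo$); this yields $\theta_1^\ast$ directly, and substituting $1+\sigma=H(U_1,U_2)/\Rsrcone$ and $\kappa\sigma=\kappa'$ recovers the $\theta_1^\ast$ of the statement. Either way, the inner maximum equals the displayed expression.

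For the outer stage, substitute $\Rsrctwo=H(U_1,U_2)-\Rsrcone$: the denominator above is affine in $\Rsrcone$ with slope $(\Crd-\Csoned)-\kappa(\Crd-\Cstwod)=(\Crd-\Cstwod)(\nu-\kappa)$. Hence maximizing $\Reff$ amounts to minimizing an affine function of $\Rsrcone$ over the interval $[H(U_1|U_2),H(U_1)]$, so the optimum is attained at an endpoint, the choice being dictated by the sign of $\nu-\kappa$ (with any $\Rsrcone$ in the interval optimal when $\nu=\kappa$, since the objective is then constant in $\Rsrcone$). This produces the three cases for $\Rsrcone^\ast$; plugging $\Rsrcone^\ast$ back in gives $R'_{\max}$ and $\theta_2^\ast=\kappa'\theta_1^\ast$.

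I expect the main obstacle to be the achievability check in the inner stage --- verifying that the ``natural'' time split indeed satisfies \eqref{eq:srd1x}, \eqref{eq:srd2x} and $\theta_r\ge0$ for \emph{every} admissible $\Rsrcone$, not merely at the final optimum. This is elementary algebra, but it must be carried out carefully, since it is exactly where the channel-ordering hypotheses $\Csir\ge\Csid$ and $\Crd\ge\Csid$ are consumed; one also has to dispose of the degenerate corners (e.g.\ $p\in\{0,1\}$, and $\Rsrcone=H(U_1,U_2)$, which is excluded because $\Rsrcone\le H(U_1)=1<2-p$). Once the inner problem is pinned down, the outer optimization is just the minimization of an affine function on an interval and is immediate.
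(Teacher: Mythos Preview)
Your proposal is correct and follows essentially the same route as the paper's proof: the paper makes the ansatz that $f_1,f_2,f_5$ are the active constraints at the optimum, solves for their intersection as a function of $\Rsrcone$, maximizes over $\Rsrcone$ via monotonicity (the same affine-denominator observation you make), and then verifies that $f_3,f_4$ are loose using $\Csir\ge\Csid$. Your upper/lower bound framing of the inner stage is a slightly tidier justification of why $f_1=f_2=f_5$ gives the inner optimum (the paper simply asserts this and checks feasibility afterwards), and your remark that one can alternatively invoke the uncorrelated result \eqref{eq:theta1opt}--\eqref{eq:theta2opt} with $\sigma=\Rsrctwo/\Rsrcone$ is a nice shortcut the paper does not take, but the substance and the order of the argument are the same.
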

\begin{proof}
See Appendix.
\end{proof}

\section{SC-LDPC codes}
\label{sec:scldpccodes}


We briefly review SC-LDPC codes. A regular $(l,r)$ SC-LDPC code with variable node degree $l$ and check node degree $r$ is defined by an infinite parity-check matrix
\begin{equation}
\H^{\T}=
\left[\begin{matrix} 
 \ddots &   & \ddots &   &   \\
& \hspace*{-2em}\H^{\T}_0(0) & \hspace*{-2em}\dots & \hspace*{-2em}\H^{\T}_{m_s}(m_s) &  &  \\  
 & \qquad\ddots &  & \qquad\ddots &  \\  
&   & \hspace*{-1em}\H^{\T}_0(t) & \hspace*{-1em}\ldots & \hspace*{-1em}\H^{\T}_{m_s}(t+m_s)  \\  
 &   & \qquad\ddots &   & \qquad\ddots
\end{matrix}\right],
\label{eq:ldpcc_matrix}
\end{equation}
%
where superscript $\T$ denotes the matrix transpose.
The Tanner graph describing such a code is divided into ``positions'' or ``time instants'' $t$, similar to the trellis sections in classical convolutional codes. At each position $t \in (-\infty,\infty)$ there are $M$ variable nodes, and $M\frac{l}{r}$ check nodes. This is reflected in the parity-check matrix by the fact that each submatrix $\H^{\T}_j(t+j)$, $j\in[0,m_s]$, is a sparse $M \times (M\frac{l}{r})$ binary matrix.
For our application we will consider \emph{terminated} spatially-coupled code ensembles, where the codeword is restricted to 
$t\in [1,L]$, and the parity-check matrix is therefore of finite size.

We use the regular 
$(l,r,L,w,M)$ ensemble described in \cite{Kud11}. In this ensemble, a variable node at position $t$ has $l$ connections to check nodes at positions from the range $[t,t+w-1]$,
where $w=m_s+1$. For each connection, the position of the check node is uniformly and independently chosen from that range. This randomization results in simple DE equations and thus renders the ensemble accessible to analysis. For transmission over the BEC, the code rate of the $(l,r,L,w,M)$ ensemble tends to the one of the underlying block code ensemble,
$
\lim_{w\rightarrow \infty}\lim_{L\rightarrow \infty}\lim_{M\rightarrow \infty} R(l,r,L,w,M) = 1- \frac{l}{r}
$, 
as $M$, $L$ and $w$ go to infinity, in that order.
Furthermore, its belief propagation (BP) threshold $\bpThresh$ 
tends to the maximum a posteriori (MAP) threshold $\epsilon^{\text{MAP}}$ 
of the underlying ensemble, i.e.,
$
\lim_{w\rightarrow \infty}\lim_{L\rightarrow \infty}\lim_{M\rightarrow \infty} \bpThresh(l,r,L,w,M)
=\mapThresh(l,r).
$ 
On the BEC, the Shannon limit for transmission at rate $R$ is given by $\shanLim=1-R$, below which 
reliable (error-free) transmission is possible.
The MAP threshold of a regular LDPC block code ensemble tends to the Shannon limit exponentially fast in $l$ if the design rate $R(l,r)$ is kept fixed (cf. \cite[Lemma 8]{Kud11}),
\begin{equation}
\lim_{l\rightarrow\infty}\mapThresh\left(l, r = \frac{l}{1-R(l,r)}\right)= 1-R (l,r) = \shanLim.
\label{eq:capAch}
\end{equation}
Therefore, the BP threshold of the $(l,r,L,w,M)$ ensemble asymptotically tends to the Shannon limit.

\section{Two-user bilayer SC-LDPC code}
\label{sec:twoUserBl}

We now present the main contribution of this paper, a two-user bilayer SC-LDPC code for the two-source relaying scenario with correlated sources. Before we go into the details of the relaying scheme, we first give a high-level overview of the proposed coding structure: Each source uses an SC-LDPC code to transmit its data. The rates of the codes are chosen such that the decoder at the relay can perfectly recover the sources data (note that, since the decoder exploits the known correlation between the sources, the rates can be higher than the respective link capacities). The relay recovers the codewords transmitted by the sources, and generates additional ``syndrome'' bits, which are transmitted to the destination protected by a channel code that allows error-free transmission. The decoder at the destination uses the known correlation of the sources and, additionally, the syndrome bits provided by the relay, to construct an overall code for both users that is decoded jointly in order to reconstruct the original sources data.

\subsection{Code structure}
\label{sec:twoUserBlCodeStructure}
Sources $\src_1$ and $\src_2$ use codes from the ensembles $\Cone(\lone,\rone,L,w,\Mone)$ and $\Ctwo(\ltwo,\rtwo,L,w,\Mtwo)$, respectively, with design rates $\Ri=\frac{k}{\ni}=\frac{k}{L\Mi}$, and parity-check matrices $\H^1\in [0,1]^{(\none-k)\times n_1}$ and $\H^2\in [0,1]^{(\ntwo-k)\times n_2}$ (for simplicity, we assume $L_1=L_2=L$ and $w_1=w_2=w$ without loss of generality).
They constitute the first layer of the bilayer structure.
Their rates are designed such that the relay is able to decode the source data error-free (see Section \ref{sec:rateDesign}). If the codes in the first layer are designed properly, the relay can recover the codewords $\xone$ and $\xtwo$ transmitted by the sources. It then generates $\kr$ additional syndrome bits according to
\begin{equation}
\s = \Hsyndexp \left[\begin{array}{c}\xone \\\xtwo \end{array}\right ],
\label{eq:synd_bits}
\end{equation}
using parity-check matrices $\H_\synd^1\in [0,1]^{\kr\times n_1}$ and $\H_\synd^2\in [0,1]^{\kr\times n_2}$ from the code ensembles $\Csone(\lsone,\rsone,L,w,\Mone)$ and $\Cstwo(\lstwo,\rstwo,L,w,\Mtwo)$, respectively. The extra parity-checks introduced by $\H_\synd^1$ and $\H_\synd$ constitute the second layer of the bilayer code.

The syndrome bits are transmitted to the destination protected by another SC-LDPC code of rate $\Rr=\Crd$ (its design is independent of the other codes and will therefore not be considered further). It is assumed that this code can be decoded error-free, and separately from the other codes in the system. With that assumption, the destination can now decode the source bits using the 
parity-check matrix $\H$ of the \emph{overall} code
\begin{equation}
\mathbf{H} \left[\begin{array}{c}\xone \\ \xtwo \end{array} \right] = \left[\begin{array}{cc}\mathbf{H}_1 &   \mathbf{0}\\  \mathbf{0}&\mathbf{H}_2 \\ 
\Hsynd^1 &   \Hsynd^2 \\
\H_\corr^1 &  \H_\corr^2\end{array}\right]
\left[\begin{array}{c}\xone \\\xtwo \end{array} \right] = \left[\begin{array}{c}\mathbf{0} \\ \mathbf{0} \\ \s \\ \mathbf{0}   \end{array} \right].
\label{eq:overall_decoding_joint}
\end{equation}
\begin{figure}
\begin{center}
\includegraphics[scale=0.7]{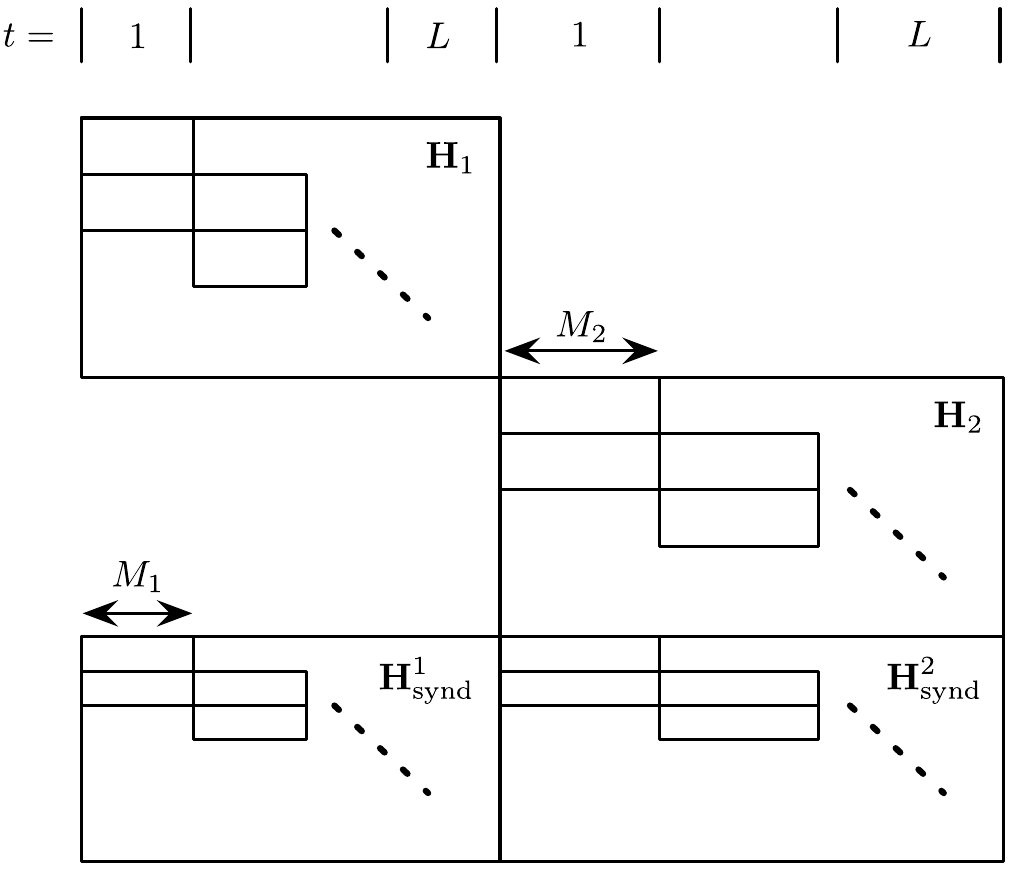}

\caption{Two-user bilayer SC-LDPC code parity-check matrix structure (without the part due to correlation).}
\label{fig:ldpcc_structure}
\end{center}
\vspace{-3ex}
\end{figure}
Here, $\H_\corr^i=\Hz\H_\sys^i$ denotes a $(k \times n_i)$ matrix, which represents the parity-check equations that are due to the correlation model: The $(k\times n_i)$ matrices $\H_\sys^i$ are used to ``extract'' the systematic bits of the corresponding codeword, i.e., $\H_\sys^i\xi=\ui$. $\Hz$ is a  diagonal $(k\times k)$ matrix, where the $n$-th diagonal entry is equal to $z_n$ (the $n$-th entry of $\mathbf{z}$). In this way, the parity-check $u_n^{(1)}\oplus u_n^{(2)}=0$ is enforced for all indices $n$ for which $z_n=1$.

Since the height of each component matrix in $\Hsynd^i$ is $M_i\frac{\lsi}{\rsi}$ (cf. Section~\ref{sec:scldpccodes}), we see that the condition
\begin{equation}
M_1\frac{\lsone}{\rsone}=M_2\frac{\lstwo}{\rstwo}\label{eq:consistencyAss}
\end{equation}
has to hold to have a proper alignment of the entries in $\Hsynd^1$ and $\Hsynd^2$ (cf. Fig.~\ref{fig:ldpcc_structure}). This alignment is necessary for the DE analysis in Section~\ref{sec:decStructDe}.

We note that the overall matrix $\H$ does not 
have the band structure of an SC-LDPC code (cf.~\eqref{eq:ldpcc_matrix}). Nevertheless, we will analytically prove capacity-achieving performance 
for symmetric channel conditions when there is no correlation between the sources and, based on DE results, we conjecture that capacity is achieved under general channel conditions and with source correlation.

From \eqref{eq:synd_bits} we see that each syndrome bit depends in general on code bits from both sources. Of the total number $\kr$ of syndrome bits, effectively $\kri$ bits ($\kr=\krone+\krtwo$) are used to respectively decode $\src_i$ at the destination. The number of bits from source $\src_i$ involved in one of the $\kr$ checks is given by the check node degree $\rsi$, since the degree is equal to the number of ones in a row of $\Hsynd^i$. The effective number of syndrome bits for source $\src_i$ is therefore
$\kri=\kr \mu_i$,
where
\begin{equation}
\mu_i\triangleq\frac{\rsi}{\rsone+\rstwo}
\label{eq:n_synd_bits}
\end{equation}
is the ratio between the number of bits of codeword $\x^{(i)}$ involved in one particular syndrome bit in $\s$ (cf. (\ref{eq:synd_bits})) and the total number of code bits involved in that syndrome bit. We also define
$\mu\triangleq\frac{\mu_1}{\mu_2}=\frac{\rsone}{\rstwo}$ 
for later use.

%
The codes that are used at the relay to generate the syndrome bits have rates
\begin{equation}
\Rsone \triangleq 1-\frac{\kr}{\none}, \qquad
\Rstwo \triangleq 1-\frac{\kr}{\ntwo}\label{eq:Rsynd_def}.
\end{equation}
They are related according to
\begin{equation}
\frac{1-\Rsone}{1-\Rstwo}=\frac{R_1}{R_2}\label{eq:Rsynd1Rsynd2}
\end{equation}
due to \eqref{eq:Rsynd_def} and $\frac{n_1}{n_2}=\frac{R_2}{R_1}$, which follows from \eqref{srcCodingRate} and the assumption $k_1=k_2=k$.


\subsection{Code rate design}
\label{sec:rateDesign}

In the following, we derive the design rates $R_1$, $R_2$, $\Rsone$, and $\Rstwo$ of the component codes of the two-user bilayer code as well as the assignment $\mu_1$ and $\mu_2$ of syndrome bits to the sources, for given link capacities. We distinguish between the cases of uncorrelated sources and correlated sources.

\subsubsection{Uncorrelated sources}

In order to ensure reliable transmission to the relay, we use channel codes of rate 
$R_i=\Csir$ 
at the sources.
The destination has to be able to decode the bilayer code consisting of the bits transmitted by a source plus the additional syndrome bits from the relay. The rate of the overall bilayer code for source $\src_i$ is
\begin{equation}
\Rbli \triangleq\frac{\ni-(\ni-k +\kri)}{\ni}=\frac{k-\kri}{\ni} =R_i-\mu_i(1-\Rsi).
\label{eq:bl_rate_def}
\end{equation}


For reliable transmission, we have to set $\Rbli=\Csid$.
Since $\mu_i=\frac{R_i-\Rbli}{1-\Rsi}=\frac{\Csir-\Csid}{1-\Rsi}$ (from \eqref{eq:bl_rate_def}), we obtain using \eqref{eq:Rsynd1Rsynd2} and $\mu_1+\mu_2=1$,
\begin{equation}
\mu = \frac{\mu_1}{\mu_2}=\frac{\Cstwor}{\Csoner}\frac{\Csoner-\Csoned}{\Cstwor-\Cstwod},\quad
\mu_1=\frac{1}{1+\mu^{-1}}, \quad \mu_2 = \frac{1}{1+\mu}.\label{sec:mu1mu2}
\end{equation}

Therefore, the rates of the codes used at the relay to generate additional syndrome bits have to be set to
\begin{equation}
\Rsi=1-\frac{1}{\mu_i}(R_i-\Rbli)=1-\frac{1}{\mu_i}(\Csir-\Csid)\label{mu1mu2},
\end{equation}
i.e.,
\begin{align}
\Rsone&=1-\left( 2\Csoner-\kappa\Cstwod-\Csoned\right),\nonumber\\
\Rstwo&=1-\left( 2\Cstwor-\frac{1}{\kappa}\Csoned-\Cstwod\right).\nonumber
\end{align}

It can easily be shown that this choice of code rates together with $\Rr=\Crd$ leads to the optimum time allocation \eqref{eq:theta1opt}, 
\begin{align*}
\theta_1&=\frac{n_1}{n_1+n_2+n_r}=\frac{1}{1+\frac{n_2}{n_1}+\frac{k_r}{\Rr n_1}}\\
        &=\frac{\Crd}{\Crd(1+\kappa)+(1-\Rsone)}=\theta_1^*,
\end{align*}
for $\sigma=1$. Similar calculations verify $\theta_2=\theta_2^*$ and $\theta_r=\theta_r^*$.

\subsubsection{Correlated sources}

To derive the rates for the case of correlated sources, we again turn to the system with separate source-channel coding, which we used to find the optimum time allocation in Section \ref{thlim_corr}. The joint source-channel coded system 
then uses channel codes of rates $R_1$ and $R_2$ that are equal to the \emph{transmit} rates (i.e., rates including source and channel coding) of the system with separate source-channel coding.

In the case of correlated sources, the codewords (as opposed to the information sequences) of the two sources should ideally be independent for transmission over independent channels \cite{barros2006network}. Therefore, we consider punctured systematic codes \cite{yedla2009can}, i.e., all systematic bits are punctured, to minimize the correlation between codewords. If all  systematic bits are punctured, only $\np_1 \triangleq n_1-k$ and $\np_2 \triangleq n_2-k$ coded bits are transmitted in the first two phases of the transmission block. The total transmission length is then $N=\np_1+\np_2+n_r$ and we redefine $\theta_1=\np_1/N$ and $\theta_2=\np_2/N$. With these redefinitions, the relation between the effective and the actual transmission rates \eqref{eq:effvsrealrates} becomes
\begin{equation}
\Reff=\frac{k}{N} = \frac{k \theta_i}{n_i-k}=\theta_i \frac{R_i}{1-R_i}\label{eq:effvsrealratesPunct}.
\end{equation}

We define the punctured code and transmission rates
\begin{equation*}
\Rcip\triangleq \frac{\kt_i}{n_i-\kt_i} = \frac{\Rci}{1-\Rci},\quad
\Rip\triangleq \frac{k}{n_i-k} = \frac{\Ri}{1-\Ri}.
\end{equation*}


From \eqref{eq:effvsrealratesPunct}, \eqref{eq:sr1x} and \eqref{eq:sr2x}, we obtain the optimum transmission rates for the first layer,
\begin{equation*}
R_1 = \phi\left(\frac{\Csoner}{\Rsrcone^*}\right)
, \qquad \Rtwo=\phi\left(\frac{\Cstwor}{\Rsrctwo^*}\right),
\end{equation*}
where $\phi(x)\triangleq \frac{x}{1+x}$. In terms of the punctured transmission rates,
\begin{equation}
\tilde{R}_1 = \frac{\Csoner}{\Rsrcone^*}
, \qquad \tilde{R}_2=\frac{\Cstwor}{\Rsrctwo^*}.\label{eq:r1r2corrp}
\end{equation}


The rates of the codes of the second layer can be obtained from the optimum time allocation with $\Rr=\Crd$ and
\begin{equation}
\frac{\theta_r}{\theta_i}=\frac{\nr}{n_i-k}=\frac{\kr}{(n_i-k)\Rr}=\frac{1}{\Rr(1-R_i)}(1-\Rsone)\label{eq:rsynd_1_theta_p}.
\end{equation}
We have
\begin{align}
\Rsone&=1-\Rr (1-R_1) \left( \frac{1}{\theta_1^*}-\kappa'(\Rsrcone^*)-1 \right)\nonumber\\
&=1-(1-R_1)\left( \frac{H(U_1,U_2)}{\Rsrcone^*}\Csoner-\kappa'\Cstwod-\Csoned\right)\nonumber
\end{align}
and
\begin{align}
\Rstwo&=1-\Rr (1-\Rtwo)\left( \frac{1}{\theta_2^*}-\frac{1}{\kappa'(\Rsrcone^*)}-1 \right)\nonumber\\
&=1-(1-R_2)\left( \frac{H(U_1,U_2)}{\Rsrctwo^*}\Cstwor-\frac{1}{\kappa'}\Csoned-\Cstwod\right).\nonumber
\end{align}
The punctured bilayer code rates are
\begin{equation*}
\Rblcip\triangleq\frac{\kt_i-\kri}{n_i-\kt_i}=\frac{1}{1-\Rci}\left( \Rci-\mu_i(1-\Rsi)\right),
\end{equation*}
and they have to satisfy $\Rblcip=\Csid$. Therefore we get
\begin{align*}
\mu&=\frac{\mu_1}{\mu_2}=\frac{\Rcone-\Rblcpone(1-\Rcone)}{\Rctwo-\Rblcptwo(1-\Rctwo)} \cdot \frac{1-\Rstwo}{1-\Rsone}\\
&= \frac{\Rcpone-\Rblcpone}{\Rcptwo-\Rblcptwo} \cdot \frac{1-\Rcone}{1-\Rctwo}\cdot\frac{\Rctwo}{\Rcone}\\
&=\frac{\Rcpone-\Rblcpone}{\Rcptwo-\Rblcptwo} \cdot \frac{\Rcpone}{\Rcptwo}
=\frac{\Csoner-\Csoned}{\Cstwor-\Cstwod} \cdot \frac{\Cstwor}{\Csoner},
\end{align*}
and we can obtain $\mu_1$ and $\mu_2$ from \eqref{sec:mu1mu2}.

With that, we have determined the parameters $R_1$, $R_2$, $\Rsone$, $\Rstwo$, $\mu_1$ and $\mu_2$ of the bilayer SC-LDPC code for the relaying system with two correlated sources for given link capacities.

\section{Density Evolution and a Proof for Symmetric Conditions and Uncorrelated Sources}
\label{sec:decStructDe}

In this section, we give the DE equations for the proposed bilayer SC-LDPC codes on the BEC. First, we treat the case of uncorrelated sources, then we extend our consideration to the case of correlated sources. We also prove that the proposed code construction achieves the maximum decode-and-forward rate for the case of symmetric conditions and uncorrelated sources. 

\subsection{Density evolution for uncorrelated sources}

When there is no correlation between the sources, the optimum decoder at the relay consists of two separate decoders for codes $\Cone$ and $\Ctwo$, since the two source-to-relay links are independent. On the other hand, the decoder at the destination decodes both users jointly from the channel observations of both source transmissions plus the additional syndrome bits obtained from the relay. Fig.~\ref{fig:decoderStructure} shows the factor graph of the decoder (if one ignores the ``correlation'' part).


For source $\src_i$, we denote the messages (erasure probabilities) sent from a variable node at position $t$ in iteration $I$ to a check node in the first and the second layer as 
$\pit{i}{t}{I}$ and $\pits{i}{t}{I}$, 
respectively. Likewise, the messages from check nodes at position $t$ in iteration $I$ to variable nodes are called $\qit{i}{t}{I}$ and $\qits{i}{t}{I}$. For $t\notin [1,L]$, we have $\pit{i}{t}{I}=\pits{i}{t}{I}=0$. The DE update equations for source $\src_1$ for $t\in [1,L]$ are given as 
\begin{align}
\pit{1}{t}{I+1} & = \epsoned \bigg( \frac{1}{w} \sum_{j=0}^{w-1} \qit{1}{t+j}{I} \bigg)^{\!\lone-1}
                 \bigg( \frac{1}{w} \sum_{j=0}^{w-1} \qits{1}{t+j}{I}  \bigg)^{\!\lsone}, \nonumber\\
\qit{1}{t}{I} & = 1-\bigg( 1-\frac{1}{w} \sum_{k=0}^{w-1} \pit{1}{t-k}{I}\bigg)^{\!\rone-1}
\label{eq:qone}, \\
\pits{1}{t}{I+1} & = \epsoned \bigg( \frac{1}{w} \sum_{j=0}^{w-1} \qit{1}{t+j}{I} \bigg)^{\!\lone}
                 \bigg( \frac{1}{w} \sum_{j=0}^{w-1} \qits{1}{t+j}{I} \bigg)^{\!\lsone-1}, \nonumber\\
\qits{1}{t}{I} & = 1- \bigg( 1- \frac{1}{w} \sum_{k=0}^{w-1} \pits{1}{t-k}{I} \bigg)^{\!\rsone-1}  \nonumber\\
&~~~\cdot\bigg( 1- \frac{1}{w} \sum_{k=0}^{w-1}  \pits{2}{t-k}{I} \bigg)^{\!\rstwo}.
\label{eq:qsyndone}
\end{align}
The equations for source $\src_2$ are simply obtained by changing the source indices. The coupling of the codes of the two sources in the decoding process manifests itself in the messages \eqref{eq:qsyndone} sent from the second layer check nodes.
\begin{figure}[t]
\begin{center}
\includegraphics[scale=0.65]{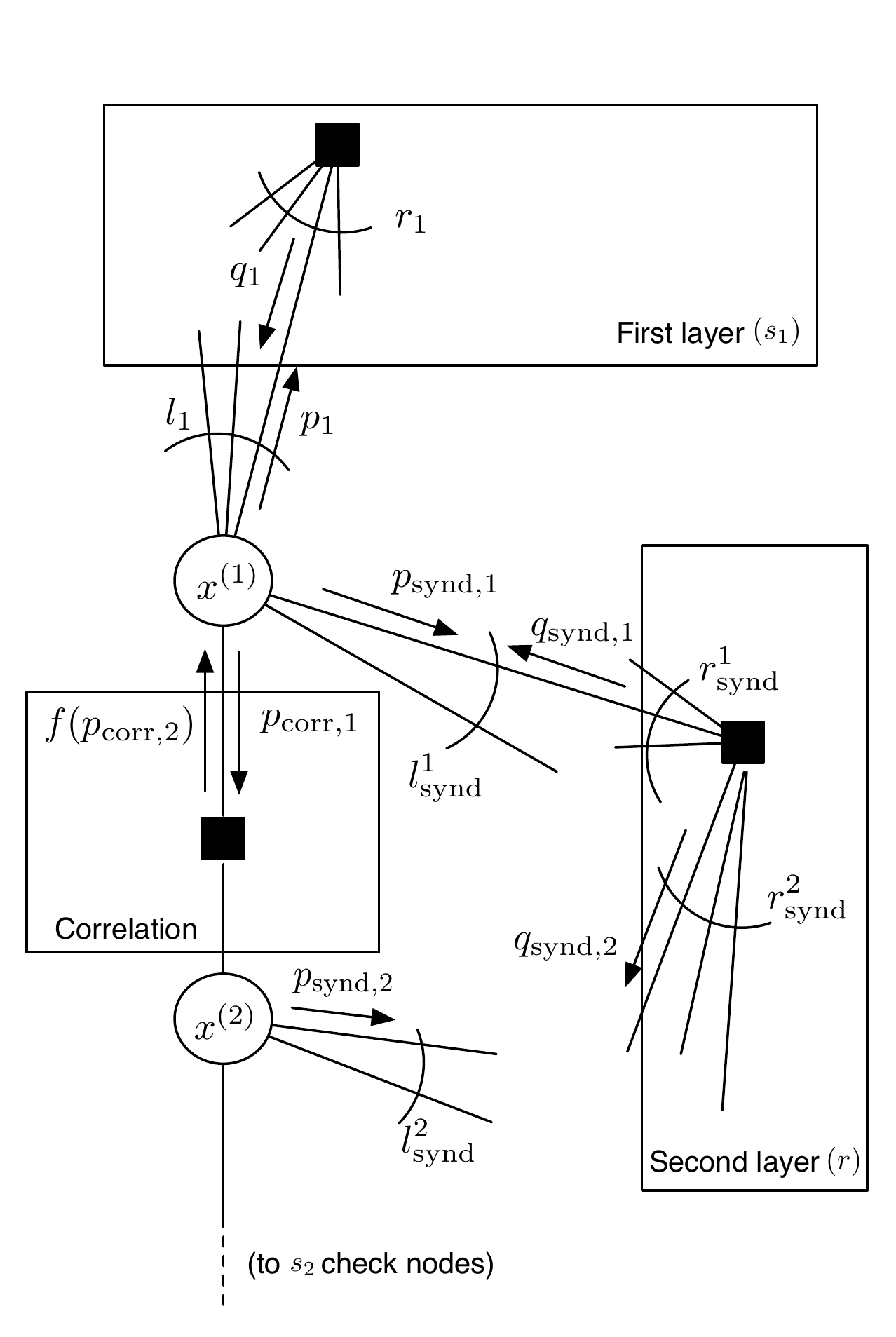}
\caption{Graph structure for the relay system with correlated sources. $x^{(1)}$ and $x^{(2)}$ shown here are systematic bits. The parity bits are connected to a channel factor node instead of a correlation factor node. The spatial coupling is not shown.}
\label{fig:decoderStructure}
\end{center}
\vspace{-3ex}
\end{figure}


\subsection{Density evolution for correlated sources}

The fact that the sources are correlated is important for the design of the decoders both at the relay as well as at the destination. The task of the relay is to detect two correlated sources whose data is transmitted over two independent noisy channels. A joint source-channel coding scheme for this scenario based on SC-LDPC codes was described in \cite{yedla2011universality}. We focus therefore now on the decoder at the destination, assuming the relay has successfully decoded the source data (which is the common assumption for DF relaying schemes).
The detector at the destination has to incorporate the syndrome bits obtained from the relay, in addition to taking the correlation between the sources into account. The structure of the resulting BP decoder (corresponding to the overall parity-check matrix $\H$ from \eqref{eq:overall_decoding_joint}) is depicted in Fig.~\ref{fig:decoderStructure}. Note that the variable nodes depicted in the figure correspond to systematic bits. For non-systematic bits, the factor node due to the correlation model does not exist; each non-systematic bit is connected to a factor node describing the channel law (erasure probability of the channel) instead.

For this decoder, the density evolution update equations for source $\src_1$ are
\begin{align}
\pit{1}{t}{I+1} & = \left[ \gamma_1 f\left( \pitc{2}{t}{I} \right) + (1-\gamma_1) \epsoned \right]\nonumber\\
&~~~\cdot          \bigg( \frac{1}{w} \sum_{j=0}^{w-1} \qit{1}{t+j}{I} \bigg)^{\!\lone-1}      \bigg( \frac{1}{w} \sum_{j=0}^{w-1} \qits{1}{t+j}{I}  \bigg)^{\!\lsone}, \nonumber\\
\qit{1}{t}{I} & = 1-\bigg( 1-\frac{1}{w} \sum_{k=0}^{w-1} \pit{1}{t-k}{I}\bigg)^{\!\rone-1},
\nonumber \\
\pits{1}{t}{I+1} & = \left[ \gamma_1 f\left( \pitc{2}{t}{I} \right) + (1-\gamma_1) \epsoned \right]\nonumber\\
&~~~\cdot\bigg( \frac{1}{w} \sum_{j=0}^{w-1} \qit{1}{t+j}{I} \bigg)^{\!\lone} \bigg( \frac{1}{w} \sum_{j=0}^{w-1} \qits{1}{t+j}{I} \bigg)^{\!\lsone-1}, \nonumber\\
\qits{1}{t}{I} & = 1- \bigg( 1- \frac{1}{w} \sum_{k=0}^{w-1} \pits{1}{t-k}{I} \bigg)^{\!\rsone-1}\nonumber\\
&~~~\cdot\bigg( 1- \frac{1}{w} \sum_{k=0}^{w-1}  \pits{2}{t-k}{I} \bigg)^{\!\rstwo}, \label{eq:qSyndUpdate} \\
\pitc{1}{t}{I+1} & = \bigg( \frac{1}{w} \sum_{j=0}^{w-1} \qit{1}{t+j}{I} \bigg)^{\!\lone}
                 \bigg( \frac{1}{w} \sum_{j=0}^{w-1} \qits{1}{t+j}{I}  \bigg)^{\!\lsone}.\nonumber
\end{align}

Equation \eqref{eq:qSyndUpdate} for the updates from the relay check nodes to the variable nodes of source $\src_1$ is valid under the assumption \eqref{eq:consistencyAss}. The fraction of systematic bits (which is equal to the unpunctured code rate) is denoted as $\gamma_1$. The equations for source $\src_2$ have the same structure, with switched source indices. The function of the correlation factor node is
$f(a) = (1-p)+ pa$, where
$p$ is the correlation parameter. This expresses that a factor node connecting the graphs of the two users exists with probability $p$, and there is no connection with probability $(1-p)$ (cf. correlation model from Section \ref{sec:sysmodel}).

\subsection{A proof for symmetric channels and uncorrelated sources}
\label{sec:ProofUncSources}

In the following, for the case of uncorrelated sources and symmetric channel conditions, defined as $\epsoner=\epstwor=\epsr$ and $\epsoned=\epstwod=\epsd$, we prove that the proposed scheme achieves the highest possible DF rate on the TD-MARC with BEC links. We call the two-user bilayer ensemble consisting of $\Ci(l,r,L,w)$ and $\Csi(\ls,\rs,L,w)$ the $\Cbl(l,\ls,r,\rs,L,w)$ ensemble (we consider the case $M\rightarrow\infty$ in the remainder of this section).
\begin{lemma}
For the case of uncorrelated sources and symmetric channel conditions
and $\rs=r/2$, the two-user bilayer code $\Cbl(l,\ls,r,r/2,L,w)$ achieves the same DE threshold for each 
source-destination link as the single-layer code $\Ccode(l+\ls,r,L,w)$.
\label{lemma:deThresh}
\end{lemma}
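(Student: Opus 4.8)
The plan is to show that the density evolution recursion for the bilayer ensemble $\Cbl(l,\ls,r,r/2,L,w)$, under symmetric conditions ($\epsoner=\epstwor=\epsr$, $\epsoned=\epstwod=\epsd$) and with uncorrelated sources, collapses onto the density evolution recursion for the single-layer ensemble $\Ccode(l+\ls,r,L,w)$. First I would invoke symmetry: since the two source–relay links and the two source–destination links have identical erasure probabilities, and the code parameters are symmetric (same $l$, $\ls$, $r$, $\rs$, $L$, $w$, and by \eqref{eq:consistencyAss} matching $M_1 l_\synd^1/r_\synd^1 = M_2 l_\synd^2/r_\synd^2$, so with $\ls^1=\ls^2=\ls$ we may take $M_1=M_2$), the DE messages for source $\src_1$ and source $\src_2$ coincide at every position $t$ and every iteration $I$, i.e. $\pit{1}{t}{I}=\pit{2}{t}{I}$ and $\pits{1}{t}{I}=\pits{2}{t}{I}$. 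I would establish this by induction on $I$, starting from the initialization $\pit{i}{t}{0}=\pits{i}{t}{0}=\epsd$ for $t\in[1,L]$ (and $0$ outside), and noting that the update equations \eqref{eq:qone}–\eqref{eq:qsyndone} are identical under the swap of source indices.

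Second, with this identification in hand, the second-layer check-node update \eqref{eq:qsyndone} becomes
\[
\qits{1}{t}{I} = 1-\Bigl( 1-\tfrac{1}{w}\textstyle\sum_{k=0}^{w-1}\pits{1}{t-k}{I}\Bigr)^{\!\rsone-1}\Bigl( 1-\tfrac{1}{w}\textstyle\sum_{k=0}^{w-1}\pits{1}{t-k}{I}\Bigr)^{\!\rstwo} = 1-\Bigl( 1-\tfrac{1}{w}\textstyle\sum_{k=0}^{w-1}\pits{1}{t-k}{I}\Bigr)^{\!\rsone+\rstwo-1},
\]
so the second-layer checks now behave exactly like checks of degree $\rsone+\rstwo$. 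By \eqref{eq:Rsynd1Rsynd2} with $R_1=R_2$ (symmetric link capacities force $R_1=\Csoner=\Cstwor=R_2$ in the uncorrelated design), we get $\rsone=\rstwo\triangleq\ls$, hence $\rsone+\rstwo=2\ls$; combined with the lemma's hypothesis $\rs=r/2$ this gives a second-layer check degree of $r$. Now I would argue that a variable node of the bilayer code sees $l$ first-layer checks of degree $r$ and $\ls$ second-layer checks, each of which (after the collapse above) is effectively a degree-$r$ check fed by the ``same-looking'' messages; since both $\pit{1}{}{}$ and $\pits{1}{}{}$ obey the same scalar recursion with the same channel value $\epsd$, a single merged message variable $p^{(t,I)}$ suffices, and the variable-node update multiplies $\epsd$ by the product over all $l+\ls$ incident degree-$r$ checks. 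That is precisely the DE variable-node update of the single-layer $(l+\ls,r,L,w)$ ensemble, and the check-node updates match as well; therefore the two recursions have the same fixed points and the same threshold on each source–destination link.

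The main obstacle I anticipate is making the ``effective degree-$r$ second-layer check'' argument fully rigorous: the two families of messages $\pit{1}{}{}$ and $\pits{1}{}{}$ are governed by slightly different exponents in the variable-node updates ($\lone-1$ versus $\lone$, $\lsone$ versus $\lsone-1$), so I must verify that, under the symmetric initialization, they actually remain equal for all $t$ and $I$ — this needs a careful simultaneous induction showing $\pit{1}{t}{I}=\pits{1}{t}{I}$ as well, which holds because the merged variable-node recursion treats all $l+\ls$ incident checks symmetrically once the check degrees agree. A secondary point to handle cleanly is the boundary/termination effect (positions $t\notin[1,L]$), but since the windowed sums and the zero-padding convention are identical for the bilayer and single-layer ensembles, the boundary behavior transfers verbatim. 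Once threshold equality is established, I would close by recalling from Section~\ref{sec:scldpccodes} that $(l+\ls,r,L,w)$ has BP threshold tending to the MAP threshold of the $(l+\ls,r)$ block ensemble, which in turn tends to the Shannon limit $1-R$; matching this $R$ against the optimum rate allocation of the Theorem (with $\sigma=1$, and the $\theta_i^*$ computation already carried out in Section~\ref{sec:rateDesign}) yields the claimed capacity-achieving property.
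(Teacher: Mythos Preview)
Your approach is essentially the paper's: use the symmetric initialization and induction to show that all four message types $\pit{1}{t}{I},\pits{1}{t}{I},\pit{2}{t}{I},\pits{2}{t}{I}$ coincide, observe that with $\rs=r/2$ the second-layer check update has total exponent $(\rs-1)+\rs=r-1$ matching \eqref{eq:qone}, and collapse the DE to that of the $(l+\ls,r,L,w)$ single-layer ensemble. Two minor remarks: you wrote ``$\rsone=\rstwo\triangleq\ls$'' where you meant $\triangleq\rs$ (the common \emph{check} degree, not the variable degree), and your closing paragraph on threshold saturation and the optimal time allocation belongs to the Corollary and Theorem that follow, not to Lemma~\ref{lemma:deThresh} itself.
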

\begin{proof}
First note that the choice $\rs=r/2$ allows us to write the DE equations in a form that we need to prove the capacity-achieving property. However, this does not restrict the possible rates available in the system and therefore the result is general.

Assuming symmetric channel conditions, both source nodes use codes $\Ci$ from the same ensemble $(l,r,L,w)$, and the relay generates the syndrome bits using two codes $\Csi$ from the same ensemble $(\ls,\rs,L,w)$. This means $\lone=\ltwo=l$, $\rone=\rtwo=r$, $\lsone=\lstwo=\ls$ and $\rsone=\rstwo=\rs$.
The initial variable-to-check messages in the first iteration are equal for both sources and both layers, $\pit{1}{t}{1}=\pits{1}{t}{1}=\pit{2}{t}{1}=\pits{2}{t}{1}=\epsd$. With
$\rs=\frac{r}{2}$,
\eqref{eq:qsyndone} becomes equal to \eqref{eq:qone}, i.e., the check-to-variable messages of the two users in both layers are equal. This means that in the second iteration the variable-to-check messages will be equal again, and via induction, the same will happen in all the following iterations. Due to the assumed symmetry, the equations for the second user are the same as for the first user. The DE for each user can therefore be written as
\begin{align*}
\pit{i}{t}{I+1} 
&= \epsd \bigg( \frac{1}{w} \sum_{j=0}^{w-1} \qit{i}{t+j}{I} \bigg) ^{\!l+\ls-1}\\ 
&= \epsd \bigg(1-\! \frac{1}{w} \sum_{j=0}^{w-1} \bigg( 1\!-\! \frac{1}{w} \sum_{k=0}^{w-1} \pit{i}{t+j-k}{I}\bigg)^{\!r-1}  \bigg)^{\!l+\ls-1},
\end{align*}
which is the update equation for a single-layer SC-LDPC code ensemble $(l+\ls,r,L,w)$. The bilayer code ensembles for both users will therefore have the same DE thresholds as the single-layer ensemble.
\end{proof}

\begin{lemma}
For uncorrelated sources and symmetric channel conditions and $\rs=r/2$, the design rate of the two-user bilayer code $\Cbl(l,\ls,r,\rs,L,w)$ for each source-destination
link approaches that of the $\Ccode(l+\ls,r,L,w)$ single-layer ensemble
for $L,w\rightarrow\infty$.
\label{lemma:rate}
\end{lemma}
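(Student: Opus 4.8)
The plan is to evaluate the per-link design rate $\Rbli$ in closed form, specialize every component parameter to the symmetric setting with $\rs = r/2$, and then take the iterated limit $M\to\infty$, $L\to\infty$, $w\to\infty$ using the rate limit of the $(l,r,L,w,M)$ ensemble recalled in Section~\ref{sec:scldpccodes}. Concretely, I would start from \eqref{eq:bl_rate_def}, $\Rbli = R_i - \mu_i(1-\Rsi)$, where $R_i$ is the design rate of the first-layer ensemble $\Ci(l,r,L,w,M)$ and $1-\Rsi = \kr/n_i$ is the ratio of the number of second-layer syndrome check nodes to the number of code bits of source $\src_i$, i.e.\ the rate-loss term of the syndrome ensemble $\Csi(\ls,\rs,L,w,M)$. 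Under the symmetric assumptions $\lone=\ltwo=l$, $\rone=\rtwo=r$, $\lsone=\lstwo=\ls$, $\rsone=\rstwo=\rs$ one has $R_1=R_2=R(l,r,L,w,M)$ and $\Rsone=\Rstwo=R(\ls,\rs,L,w,M)$, the consistency condition \eqref{eq:consistencyAss} holds with $M_1=M_2=M$, and \eqref{eq:n_synd_bits} gives $\mu_1=\mu_2=\rs/(2\rs)=\tfrac12$; hence
\[
\Rbli \;=\; R(l,r,L,w,M) \;-\; \tfrac12\bigl(1-R(\ls,\rs,L,w,M)\bigr).
\]

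I would then pass to the limit. By the rate statement of Section~\ref{sec:scldpccodes}, $\lim_{w}\lim_{L}\lim_{M} R(\tilde l,\tilde r,L,w,M)=1-\tilde l/\tilde r$ for any degree pair; applied to the two component ensembles, and using $\rs = r/2$, this gives $R_i\to 1-l/r$ and $1-\Rsi\to \ls/\rs = 2\ls/r$, so that
\[
\lim\Rbli \;=\; \Bigl(1-\frac{l}{r}\Bigr)-\frac12\cdot\frac{2\ls}{r} \;=\; 1-\frac{l+\ls}{r} \;=\; \lim_{w}\lim_{L}\lim_{M} R(l+\ls,r,L,w,M),
\]
which is precisely the design rate of the single-layer ensemble $\Ccode(l+\ls,r,L,w)$ in the same limit, proving the lemma.

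The computation itself is routine; the one point that needs care is the identification of $1-\Rsi$ with the design-rate complement of the syndrome ensemble. One must note that \eqref{eq:synd_bits} produces exactly one syndrome bit per row of $\Hsynd^i$, i.e.\ one per check node of $\Csi$, while source $\src_i$ contributes $n_i = LM$ variable nodes, so that $1-\Rsi = \kr/n_i$ carries the same $L,w$-dependent boundary correction as the design rate of $\Csi$ and the quoted limit applies verbatim. I would also be explicit that the three limits are taken in the order $M\to\infty$, then $L\to\infty$, then $w\to\infty$, as in Section~\ref{sec:scldpccodes}, so that the limit passes termwise through the finite linear combination defining $\Rbli$. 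Finally, I would remark that Lemma~\ref{lemma:deThresh} together with this lemma transfers both the DE threshold and the design rate of $\Ccode(l+\ls,r,L,w)$ to the bilayer construction, which is what is needed for the capacity-achieving claim in Section~\ref{sec:ProofUncSources}.
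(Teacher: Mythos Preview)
Your argument is correct and is essentially the paper's own proof repackaged at a higher level of abstraction: the paper writes out the explicit finite-$L,w$ check-node counts $\nCheck$ and $\nCheckS$ (including the boundary terms), forms $\Rbl = 1 - (\nCheck + \nCheckS/2)/(ML)$, and passes to the limit, whereas you obtain the identical expression via \eqref{eq:bl_rate_def} with $\mu_i=\tfrac12$ and invoke the rate limit of Section~\ref{sec:scldpccodes} as a black box on each component ensemble. The only point the paper makes more explicit is the actual form of the boundary correction in $\nCheckS$, but since you correctly identify $1-\Rsi=\kr/n_i$ as the check-to-variable ratio of the $(\ls,\rs,L,w,M)$ ensemble, your appeal to the quoted limit is fully justified.
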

\begin{IEEEproof}
The design rate of an SC-LDPC code is 
$R=1-\frac{\nCheck}{\nVar}$ \cite{Kud11},
where $\nVar$ and $\nCheck$ denote the number of variable nodes and check nodes, respectively, in the graph. The number of variable nodes per user is $\nVar=ML$. The number of checks per user in the first layer is
\begin{equation*}
\nCheck = M \frac{l}{r}\bigg[ L + 1 + w - 2\sum_{j=0}^{w-1} \left( \frac{j}{w}\right)^{\!r} \bigg],
\end{equation*}
and the number of checks, shared by both users, in the second layer is
\begin{equation*}
\nCheckS = M \frac{2 \ls}{r}\bigg[ L + 1 + w - 2\sum_{j=0}^{w-1} \bigg( \frac{j}{w}\bigg)^{\!\frac{r}{2}}\bigg].
\end{equation*}
As the checks in the second layer are shared equally between the two users, the effective number of checks per user is $\nCheckSEff=\nCheckS/2$ (cf.~\eqref{eq:n_synd_bits}). The rate of the bilayer code for each user is therefore
$\Rbl=1-\frac{\nCheck+\nCheckS/2}{ML}$.
For $L,w\rightarrow\infty$, in that order,
\begin{equation*}
\lim_{w\rightarrow\infty}\lim_{L\rightarrow\infty} \Rbl = 1- \frac{l+l_\synd}{r},
\end{equation*}
which is the same expression as for the rate of a single-layer $(l+\ls,r,L,w)$ code ensemble in the limit $L,w\rightarrow\infty$.
\end{IEEEproof}
\begin{corollary}
For uncorrelated sources and symmetric channel conditions, the two-user bilayer SC-LDPC code $\Cbl(l,\ls,r,r/2,L,w)$ has the rate of the single-layer $(l+\ls,r,L,w)$ code,
\begin{equation*}
\lim_{w\rightarrow\infty}\lim_{L\rightarrow\infty}R(l,\ls,r,r/2,L,w)=1-\frac{l+\ls}{r}.
\end{equation*}
For fixed $(l+\ls)/r$, its BP threshold tends to the Shannon limit,
\begin{equation*}
\lim_{(l+\ls)\rightarrow\infty}\lim_{w\rightarrow\infty}\lim_{L\rightarrow\infty}\bpThresh(l,\ls,r,r/2,L,w)=\frac{l+\ls}{r}.
\end{equation*}
\end{corollary}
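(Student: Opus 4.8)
The plan is to assemble the corollary from the two lemmas just proved together with the threshold‑saturation chain recalled in Section~\ref{sec:scldpccodes}; essentially nothing new needs to be computed. The first identity is immediate from Lemma~\ref{lemma:rate}: for $\rs=r/2$ it states that the design rate $\Rbl$ of the bilayer ensemble for each source--destination link converges, as $L\to\infty$ and then $w\to\infty$, to $1-(l+\ls)/r$, which is exactly the limiting design rate of the single‑layer $(l+\ls,r,L,w)$ ensemble recalled in Section~\ref{sec:scldpccodes}. So the first displayed equation requires no further work.

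For the second identity I would proceed in three steps. First, invoke Lemma~\ref{lemma:deThresh}: for every finite pair $(L,w)$ (with $M\to\infty$), the DE/BP threshold of $\Cbl(l,\ls,r,r/2,L,w)$ on a source--destination link equals the BP threshold of the single‑layer ensemble $\Ccode(l+\ls,r,L,w)$, because under symmetric conditions and $\rs=r/2$ the bilayer DE recursion collapses --- via the induction in the proof of Lemma~\ref{lemma:deThresh} --- to the single‑layer recursion with variable degree $l+\ls$. Since this holds identically in $(L,w)$, the nested limits agree, $\lim_{w\to\infty}\lim_{L\to\infty}\bpThresh(l,\ls,r,r/2,L,w)=\lim_{w\to\infty}\lim_{L\to\infty}\bpThresh(l+\ls,r,L,w)$. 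Second, apply the spatial‑coupling threshold‑saturation statement from Section~\ref{sec:scldpccodes}: the right‑hand side equals the MAP threshold $\mapThresh(l+\ls,r)$ of the underlying regular block‑code ensemble. Third, let $l+\ls\to\infty$ while keeping the ratio $(l+\ls)/r$ --- hence the design rate $R=1-(l+\ls)/r$ --- fixed; by \eqref{eq:capAch} (cf.~\cite[Lemma~8]{Kud11}), $\mapThresh(l+\ls,r)\to 1-R=(l+\ls)/r=\shanLim$, the claimed value.

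The only points requiring care --- and the main, rather mild, obstacle --- are bookkeeping issues rather than substantive ones. One must restrict attention to parameter sequences for which $\rs=r/2$ is a positive integer and $(l+\ls)/r$ equals the prescribed fixed value, which is always achievable by scaling the triple $(l,\ls,r)$; and one must take the limits in the stated order, namely $M\to\infty$, then $L\to\infty$, then $w\to\infty$, and finally $(l+\ls)\to\infty$, so that Lemma~\ref{lemma:deThresh} (stated for $M\to\infty$ and finite $L,w$) and the saturation chain from Section~\ref{sec:scldpccodes} can be concatenated without circularity. Once the ordering is respected, the proof is a direct composition of Lemmas~\ref{lemma:deThresh} and~\ref{lemma:rate} with the known single‑layer saturation and capacity‑achievability results.
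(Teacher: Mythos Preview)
Your proposal is correct and follows essentially the same route as the paper: combine Lemmas~\ref{lemma:deThresh} and~\ref{lemma:rate} to reduce the bilayer ensemble to the single-layer $(l+\ls,r,L,w)$ ensemble in both rate and BP threshold, then invoke threshold saturation and the MAP-threshold-to-Shannon-limit result from \cite[Theorem~10 and Lemma~8]{Kud11}. The paper's proof is simply a two-sentence version of what you have spelled out in more detail.
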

\begin{IEEEproof}
From Lemmas \ref{lemma:deThresh} and ~\ref{lemma:rate} we know that in the limit of large $w$ and $L$, the bilayer code ensemble has the same rate and DE threshold as the single-layer $(l+\ls,r,L,w)$ code ensemble.
Therefore, the corollary follows from \cite[Theorem 10 and Lemma 8]{Kud11}.
\end{IEEEproof}

\begin{theorem}
For a binary erasure TD-MARC with two uncorrelated sources, one relay, one destination, and for symmetric channel conditions, there exists an SC-LDPC code $\Ccode$ and an associated two-user bilayer code $\Cbl$ 
such that $\Ccode$ achieves the capacity for both source-relay
links and $\Cbl$ achieves capacity for both 
source-destination
links. In addition, this code construction achieves the highest possible rate with DF relaying.
\end{theorem}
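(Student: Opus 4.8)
The plan is to assemble the statement from the three auxiliary results just established (Lemmas~\ref{lemma:deThresh} and \ref{lemma:rate} and the Corollary), the capacity-achieving property \eqref{eq:capAch} of the $(l,r,L,w)$ ensemble, and the rate/time-allocation bookkeeping of Section~\ref{sec:rateDesign}. Under the symmetric assumption write $\epsr\triangleq\epsoner=\epstwor$ and $\epsd\triangleq\epsoned=\epstwod$, so $\Csr=1-\epsr$, $\Csd=1-\epsd$, and the working assumption $\Csir\ge\Csid$ becomes $\epsr\le\epsd$. I would pick rational ratios $\rho,\rho'$ with $\epsr<\rho$, $\epsd<\rho'$, and $\rho<\rho'$ (possible since $\epsr\le\epsd$), and build the component codes so that the first-layer ensemble $\Ccode(l,r,L,w)$ has $l/r=\rho$ (design rate $1-\rho$), the second-layer ensemble $\Csi(\ls,r/2,L,w)$ has $\ls/r=\rho'-\rho$, and the relay code has rate $\Rr=\Crd$; by \eqref{eq:bl_rate_def} with $\mu_i=1/2$ and \eqref{mu1mu2} this makes the syndrome code rate $\Rsi=1-2(\rho'-\rho)$ and the bilayer rate $\Rbli=1-(l+\ls)/r=1-\rho'$. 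One then lets the degrees grow along each fixed-ratio family ($M\to\infty$, then $w,L\to\infty$, then $l+\ls\to\infty$), and finally lets $\rho\downarrow\epsr$ and $\rho'\downarrow\epsd$.

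The argument then has three legs. First, for the source-relay links: both sources use $\Ccode(l,r,L,w)$, whose BP threshold obeys $\bpThresh\to\mapThresh(l,r)\to 1-(1-\rho)=\rho>\epsr$ by \eqref{eq:capAch} and the threshold-saturation statement of Section~\ref{sec:scldpccodes}, so the relay decodes both sources with vanishing erasure probability at a rate $1-\rho$ that tends to $\Csr$; hence $\Ccode$ achieves the source-relay capacity. Second, for the source-destination links: by Lemma~\ref{lemma:deThresh} the bilayer code $\Cbl(l,\ls,r,r/2,L,w)$ has, on each such link, the same DE threshold as the single-layer $\Ccode(l+\ls,r,L,w)$ ensemble, by Lemma~\ref{lemma:rate} its rate tends to $1-(l+\ls)/r=1-\rho'$, and by the Corollary its BP threshold tends to $(l+\ls)/r=\rho'>\epsd$; together with the standard DF assumption that the relay forwards $\s$ error-free (legitimate because $\Rr=\Crd$ and the forwarding SC-LDPC code is itself capacity-achieving on the $\rel$-$\dst$ BEC), this means the destination recovers both sources' data and $\Cbl$ achieves the source-destination capacity. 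Third, for optimality of the rate: the rates just fixed, $R_i=\Csir$ and $\Rbli=\Csid$ together with $\Rr=\Crd$, are exactly those for which Section~\ref{sec:rateDesign} shows the induced time allocation equals the optimum $(\theta_1^*,\theta_2^*,\theta_r^*)$ of \eqref{eq:theta1opt}--\eqref{eq:theta2opt} with $\sigma=1$; since $\Rieff=\theta_iR_i$ and, by Section~\ref{sec:theoretical_limits}, that time allocation is the one maximising $\Reff$ subject to the DF region \eqref{eq:sr1}--\eqref{eq:sumConst}, the achieved effective rate $\Reff$ (equivalently the sum rate $\Rsys=2\Reff$) equals the highest rate attainable with DF relaying.

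The main obstacle is the simultaneity in the second leg: one and the same sequence of ensembles, sharing the underlying $(l,r,L,w)$, must drive the first-layer code to the source-relay capacity \emph{and} the bilayer code to the source-destination capacity, while respecting $\rs=r/2$ and the induced relation $1-\Rsi=2(\Csir-\Csid)$ coming from $\mu_i=1/2$. This is feasible because $\Csir\ge\Csid$ forces $\ls/r\ge0$; one should, however, record that the syndrome code rate $\Rsi=1-2(\Csir-\Csid)$ lies in $[0,1)$ iff $\Csir-\Csid<1/2$, a regime the construction tacitly assumes (outside it the bilayer structure with $\rs=r/2$ degenerates and one would instead let the relay forward some systematic bits directly). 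A secondary, routine point is to justify the order of the limits and the diagonalisation over the fixed-ratio families, which is exactly the order in which \eqref{eq:capAch} and the Corollary are stated.
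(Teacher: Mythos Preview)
Your proposal is correct and follows essentially the same route as the paper: fix $l/r$ to hit the source--relay capacity, fix $\ls/r$ so that $(l+\ls)/r$ hits the source--destination capacity with $\rs=r/2$, invoke Lemmas~\ref{lemma:deThresh}--\ref{lemma:rate} and the Corollary for the bilayer threshold and rate, and then check that the induced time allocation coincides with $(\theta_1^*,\theta_2^*,\theta_r^*)$. The paper does the last step by a direct computation of $\theta_i=\nVar/(\none+\ntwo+\nr)$ rather than by pointing back to Section~\ref{sec:rateDesign}, and it sets $l/r=\epsr$ and $\ls=r(\epsd-\epsr)$ outright instead of introducing your intermediate rationals $\rho,\rho'$; your limiting argument and your remark on the regime $\Csir-\Csid<1/2$ are legitimate refinements the paper leaves implicit.
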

\begin{IEEEproof}
Recall that the capacities of the 
source-relay links are
$\Csr=1-\epsr$.
We use capacity-achieving SC-LDPC codes from the ensemble $\Ccode(l,r,L,w)$, with
\begin{equation}
\frac{l}{r}=\epsr,\label{eq:epsrparams}
\end{equation} which are known to be asymptotically capacity achieving (cf.~\eqref{eq:capAch}), and therefore the relay will be able to decode successfully.

Let $\nVar$ be the number of variable nodes in $\Ccode$. In the limit $L\rightarrow\infty$ there are
$\nCheck=\frac{l}{r} \nVar$
check nodes. The effective number of additional bits needed by the destination and provided by the relay is (cf. \eqref{eq:n_synd_bits}, 
\eqref{mu1mu2}, $\mu_i=1/2$)
\begin{equation}
\nCheckSEff=\nCheckS/2=\nVar (\Csr-\Csd) = \nVar (\epsd-\epsr).
\label{eq:nEffChks}
\end{equation}
As $\rs=r/2$,
the additional effective $\nCheckSEff$ check nodes from the second layer add $\nCheckS r/2=r\nCheckSEff$ edges. The variable node degree $\ls$ is
\begin{equation}
\ls=r\nCheckSEff/\nVar = r(\epsd-\epsr).\label{eq:lsynd}
\end{equation}
From Corollary 1, together with \eqref{eq:epsrparams} and \eqref{eq:lsynd}, it follows that
\begin{equation*}
\lim_{w\rightarrow\infty}\lim_{L\rightarrow\infty}R(l,\ls,r,r/2,L,w)=1-\frac{l+\ls}{r}=1-\epsd,
\end{equation*}
and for fixed $(l+\ls)/r$,
\begin{equation*}
\lim_{(l+\ls)\rightarrow\infty}\lim_{w\rightarrow\infty}\lim_{L\rightarrow\infty}\bpThresh(l,\ls,r,r/2,L,w)=\epsd.
\end{equation*}
Therefore, we see that $\Cbl$ achieves the capacity of the source-destination links $\Csd=1-\epsd$ for both users. The number of channel uses in the first two transmission phases is $\none=\ntwo=\nVar$. A capacity-achieving SC-LDPC code is used to transmit the $\nCheckS$ syndrome bits in the third phase, using $\nr=\nCheckS/\Crd$ channel uses. Thereby we have shown (using \eqref{eq:nEffChks})
\begin{equation*}
\frac{\ni}{\none+\ntwo+\nr}=\frac{\Crd}{2\Crd+2(\Csr-\Csd)}=\theta_i^*,
\end{equation*}
i.e., our code design uses the optimum time allocation \eqref{eq:theta1opt}, which maximizes the achievable rate.
\end{IEEEproof}

While a proof for the general case of correlated sources and non-symmetric channel conditions is more difficult to obtain, in Section~\ref{sec:numresults} we use DE to demonstrate that the proposed code construction approaches capacity also for the general scenario.

\section{Achievable Channel Parameters for Given Code Rates}
\label{sec:achChanParam}

In this section, we derive the theoretical achievable region of channel parameters $(\epsoner,\epstwor)$ and $(\epsoned,\epstwod)$ for the s-r and s-d channels, respectively, corresponding to the pairs of channel parameters for which error-free transmission is possible, given the transmission rates $\Rone$, $\Rtwo$, $\Rsone$ and $\Rstwo$. This region is the theoretical benchmark to which the performance of the proposed two-user bilayer code ensembles must be compared. We carry out the derivations first for non-punctured codes (to be used for uncorrelated sources) and then for punctured codes (used for correlated sources).

We consider the $\src_i$-$\rel$ links first. From \eqref{eq:sr1x} and \eqref{eq:sr2x}, we obtain
\begin{align}
&R_1 \le \frac{1}{\Rsrcone}(1-\epsoner), \quad R_2 \le \frac{1}{\Rsrctwo}(1-\epstwor),\nonumber\\
&
\Rightarrow \epsoner \le 1 - \Rsrcone R_1 , \quad \epstwor \le 1 - \Rsrctwo R_2 .\label{eq:eps1reps2rlim}
\end{align}

\begin{figure*}
\centering
\subfloat[][]{
\includegraphics[scale=0.6]{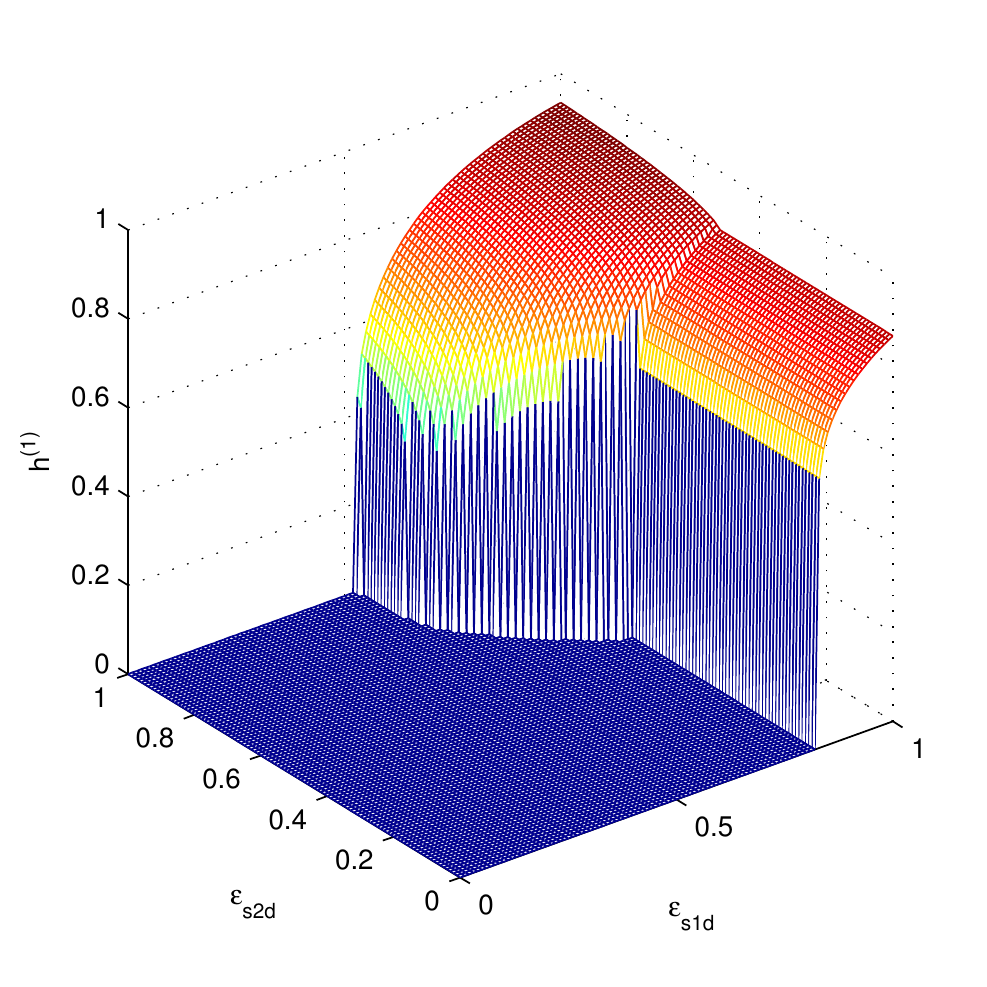}
}
\subfloat[][]{
\includegraphics[scale=0.6]{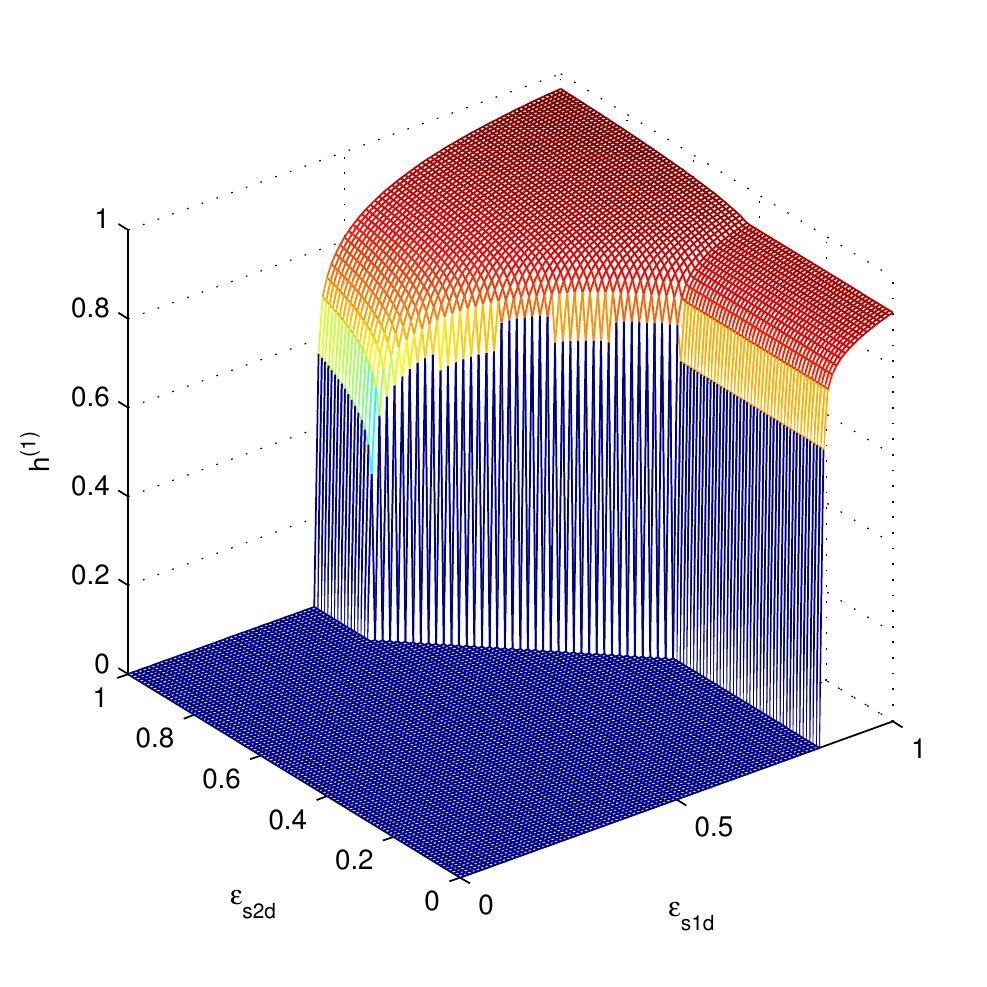}
}
\subfloat[][]{
\includegraphics[scale=0.6]{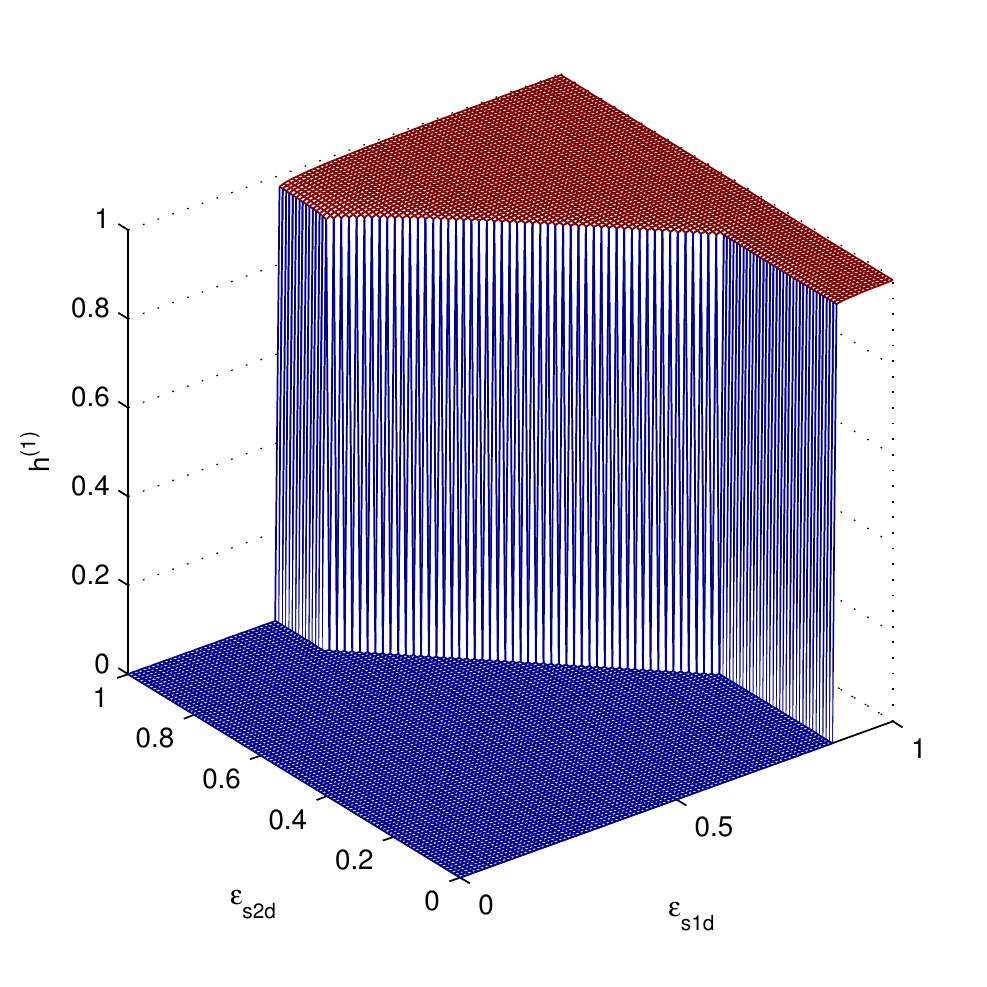}
}
\caption{3D EXIT chart $h^{(1)}(\epsoned,\epstwod)$ for code A, source $\src_1$, correlation parameter $p=0.3$. The chain-length is a) $L=10$ b) $L=20$ c) $L=300$.}
\label{fig:2dexitaa}
\vspace{-3ex}
\end{figure*}

The system with joint source-channel coding does not have specific source coding rates. We obtain the achievable $(\epsoner,\epstwor)$-region as the union of all regions defined by the inequalities \eqref{eq:eps1reps2rlim} for all $H(U_1|U_2)\le\Rsrcone\le1$. The resulting region is a pentagon with corner points $(0,0)$, $(a_{\src\rel},0)$, $(a_{\src\rel},b_{\src\rel})$, $(c_{\src\rel},d_{\src\rel})$, and $(0,d_{\src\rel})$, where
\begin{align*}
a_{\src\rel}& = 1-H(U_1|U_2) R_1, &\quad d_{\src\rel}&=1-H(U_2|U_1) R_2,\\
c_{\src\rel}&=1-R_1, &\quad b_{\src\rel}&=1-R_2.
\end{align*}

For the $\src_i$-$\dst$ links, we consider the inequalities \eqref{eq:srd1x}, \eqref{eq:srd2x} and $\eqref{eq:sumConstx}$. Using
\begin{equation*}
\frac{\theta_r}{\theta_i}=\frac{\nr}{n_i}=\frac{\kr}{n_i\Rr}=\frac{1}{\Rr}(1-\Rsone)
\end{equation*}
and $\Rr=\Crd$, we obtain the following expressions. From \eqref{eq:srd1x} we obtain
\begin{align}
\frac{\Rsrcone \Reff-\theta_r \Crd}{\theta_1}&\le 1-\epsoned \nonumber\\
\Rightarrow \epsoned &\le 1-\left(\Rsrcone R_1 - \frac{1-\Rsone}{\Rr} \Crd \right )\nonumber \\
&= 1-\left( \Rsrcone R_1 - ( 1-\Rsone) \right)\label{eq:achepsone}.
\end{align}
From \eqref{eq:srd2x} we get
\begin{equation}
\epstwod \le  1-\left(  \Rsrctwo R_2 - ( 1-\Rstwo) \right)\label{eq:achepstwo}.
\end{equation}
Finally, from \eqref{eq:sumConstx} we obtain
\begin{equation}
\epstwod \le 1 - \left( R_2 H(U_1,U_2)-\frac{R_2}{R_1}(1-\epsoned)-(1-\Rstwo) \right).\label{eq:achepstightsym}
\end{equation}
We obtain the achievable $(\epsoned,\epstwod)$-region as the intersection of the union of all regions defined by inequalities \eqref{eq:achepsone} and \eqref{eq:achepstwo} for $H(U_1|U_2)\le\Rsrcone\le 1$ with the region defined by \eqref{eq:achepstightsym} for $0\le\epsoned\le1$. It is again a pentagon with corner points $(0,0)$, $(a_{\src\dst},0)$, $(a_{\src\dst},b_{\src\dst})$, $(c_{\src\dst},d_{\src\dst})$, and $(0,d_{\src\dst})$, where
\begin{align*}
a_{\src\dst}&=1-(H(U_1|U_2) R_1 - (1-\Rsone)), \\
d_{\src\dst}&=1-(H(U_2|U_1) R_2 - (1-\Rstwo)),\\
c_{\src\dst}&=1-R_1,\\
b_{\src\dst}&=1-R_2.
\end{align*}

%
When systematic punctured codes are used, we obtain the following modified limits. For the s-r links, instead of \eqref{eq:eps1reps2rlim}, we have
\begin{equation*}
\epsoner \le 1 -  \Rsrcone \Ronep, \qquad \epstwor \le 1 - \Rsrctwo \Rtwop.
\end{equation*}
The resulting achievable rate region $(\epsoner,\epstwor)$ is a pentagon with corner points
\begin{align*}
a_{\src\rel}^p&=1-H(U_1|U_2) \Ronep,  &\quad d_{\src\rel}^p&=1-H(U_2|U_1) \Rtwop,\\
c_{\src\rel}^p&=1-\Ronep, &\quad b_{\src\rel}^p&=1-\Rtwop.
\end{align*}

\begin{table*}
\scriptsize
\caption{Code ensembles.}
\vspace{-2ex}
\begin{center}\begin{tabular}{ccccccccccccc}
\toprule
Label & $(\lone,\rone)$ & $(\ltwo,\rtwo)$ &$(\lsone,\rsone)$ & $(\lstwo,\rstwo)$ & $\mu_1$ & $\mu_2$ & $w$ & $L$ & $\Ronep$ & $\Rtwop$ & $\Rsone$ & $\Rstwo$ \\
\otoprule
Code A & $(6,10)$ & $(6,10)$ & $(2,10)$ & $(2,10)$ & $0.5$ & $0.5$ & $10$ & $600$ & $0.6446$ & $0.6446$ & $0.7973$ & 0.7973\\[0.5mm]
Code B & $(12,20)$ & $(14,20)$ & $(4,14)$ & $(3,14)$ & $0.5$ & $0.5$ & $10$ & $600$ & $0.6427$ & $0.4080$ & $0.7102$ & 0.7827\\[0.5mm]
\midrule
\bottomrule
\end{tabular} \end{center}
\label{tbl:codes}
\end{table*}

The inequalities for the s-d links are
\begin{align*}
\epsoned &\le 1- \frac{1}{1-R_1}\left(\Rsrcone R_1 - (1-\Rsone)  \right),\\
\epstwod &\le 1- \frac{1}{1-R_2}\left(\Rsrctwo R_2 - (1-\Rstwo)  \right),\\
\epstwod &\le 1- \frac{1}{1-R_2}\bigg(  H(U_1,U_2) R_2\\
&~~~- \frac{R_2}{R_1} (1-R_1)(1-\epsoned)-(1-\Rstwo) \bigg).
\end{align*}
The pentagon describing the achievable $(\epsoned,\epstwod)$-region has the corner points
\begin{align}
a_{\src\dst}^p&=1-\left(H(U_1|U_2) \Ronep - \frac{1}{1-R_1} (1-\Rsone)\right),\label{eq:achRegionPunct1} \\
d_{\src\dst}^p&=1-\left(H(U_2|U_1) \Rtwop - \frac{1}{1-R_2} (1-\Rstwo)\right),\label{eq:achRegionPunct2}\\
c_{\src\dst}^p&=1-\Ronep,\qquad b_{\src\dst}^p=1-\Rtwop.\label{eq:achRegionPunct3}
\end{align}
Examples for these achievable regions are given in the next section.


\section{Numerical results}
\label{sec:numresults}

The outstanding performance of SC-LDPC ensembles is due to the saturation of the BP threshold to the MAP threshold of the underlying block code ensemble. This phenomenon has been proven to occur in single-user SC-LDPC $(l,r,L,w)$ ensembles for the BEC \cite{Kud11}. As we showed in Section~\ref{sec:ProofUncSources}, the density update equations of the two-user bilayer SC-LDPC code ensemble reduce to the ones of a single-user code in the case of uncorrelated sources and symmetric channel conditions. Therefore, for this case, threshold saturation occurs. In the following, we use BP EXIT functions \cite{richardson2008modern} to give empirical evidence that threshold saturation does also occur for the non-symmetric scenario with correlated sources. Furthermore, we show how the joint decoding of both users makes the system more robust.

To assess the performance of the proposed relaying scheme, we construct three two-user bilayer SC-LDPC code ensembles. Their parameters are given in Table~\ref{tbl:codes}. For Code A, the same code ensembles are used for $\Cone$ and $\Ctwo$, and for $\Csone$ and $\Cstwo$, i.e., this bilayer code is designed for the symmetric case. On the other hand, code B is designed for a system with $\src_i$-$\rel$ links of different reliabilities, which is reflected by the use of codes $\Cone$ and $\Ctwo$ with different rates. Note that in some cases, the ensembles used at the relay to generate extra syndrome bits have variable node degree 2. However, this is not a problem, since these codes are never used alone, but within the bilayer structure.

We use a two-dimensional EXIT analysis to show that the BP threshold of the bilayer SC-LDPC code ensembles closely approaches the limits derived in Section~\ref{sec:achChanParam}. In our system, there are two users and two decoders. We concentrate on the decoder at the destination, since the one at the relay was already described in \cite{yedla2011universality}. 
For each source, the decoding performance depends on both channel erasure probabilities $\epsoned$ and $\epstwod$. The two-dimensional EXIT function for source $\src_i$ is
\begin{equation*}
h^{(i)}(\epsoned,\epstwod)\triangleq\frac{1}{L}\sum_{t=1}^L m^{(i)}_t,
\end{equation*}
where 
\begin{equation*}
m_t^{(i)}\triangleq \lim_{I\rightarrow\infty}\bigg( \frac{1}{w} \sum_{j=0}^{w-1} \qit{i}{t+j}{I} \bigg)^{\!l_i}
                 \bigg( \frac{1}{w} \sum_{j=0}^{w-1} \qits{i}{t+j}{I}  \bigg)^{\!\lsi}.
\end{equation*}
To obtain a single parameter for the whole code chain, we average over all messages at different time instants (cf. \cite{Kud11}).
\begin{figure*}
\centering
\subfloat[][]{
\includegraphics[scale=0.6]{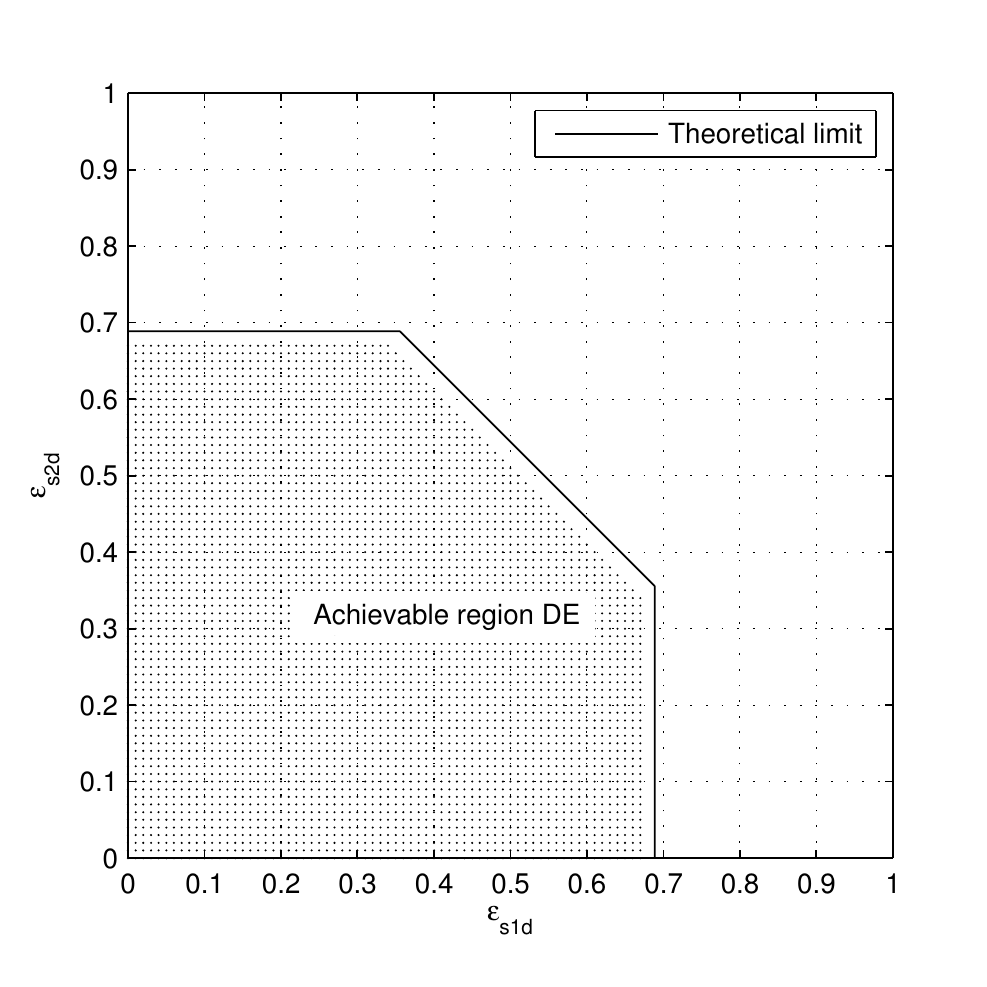}
}
\subfloat[][]{
\includegraphics[scale=0.6]{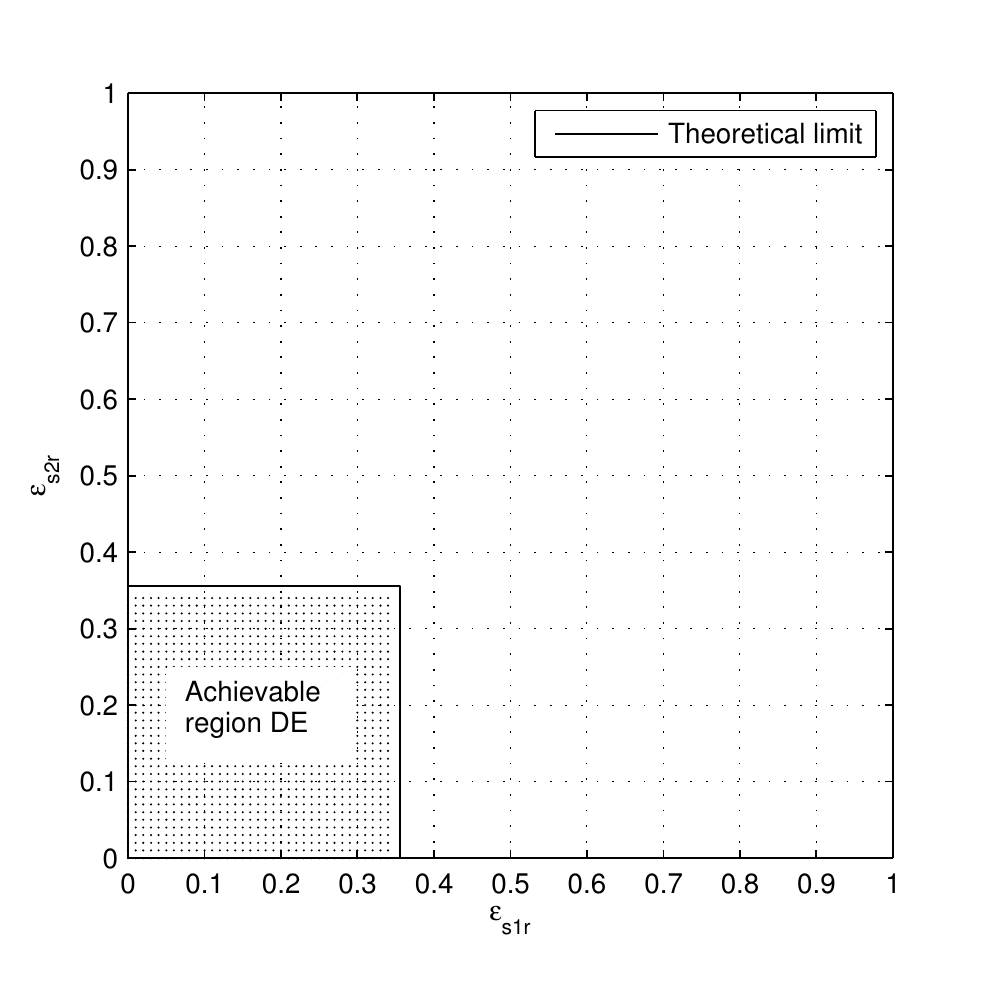}
}
\subfloat[][]{
\includegraphics[scale=0.6]{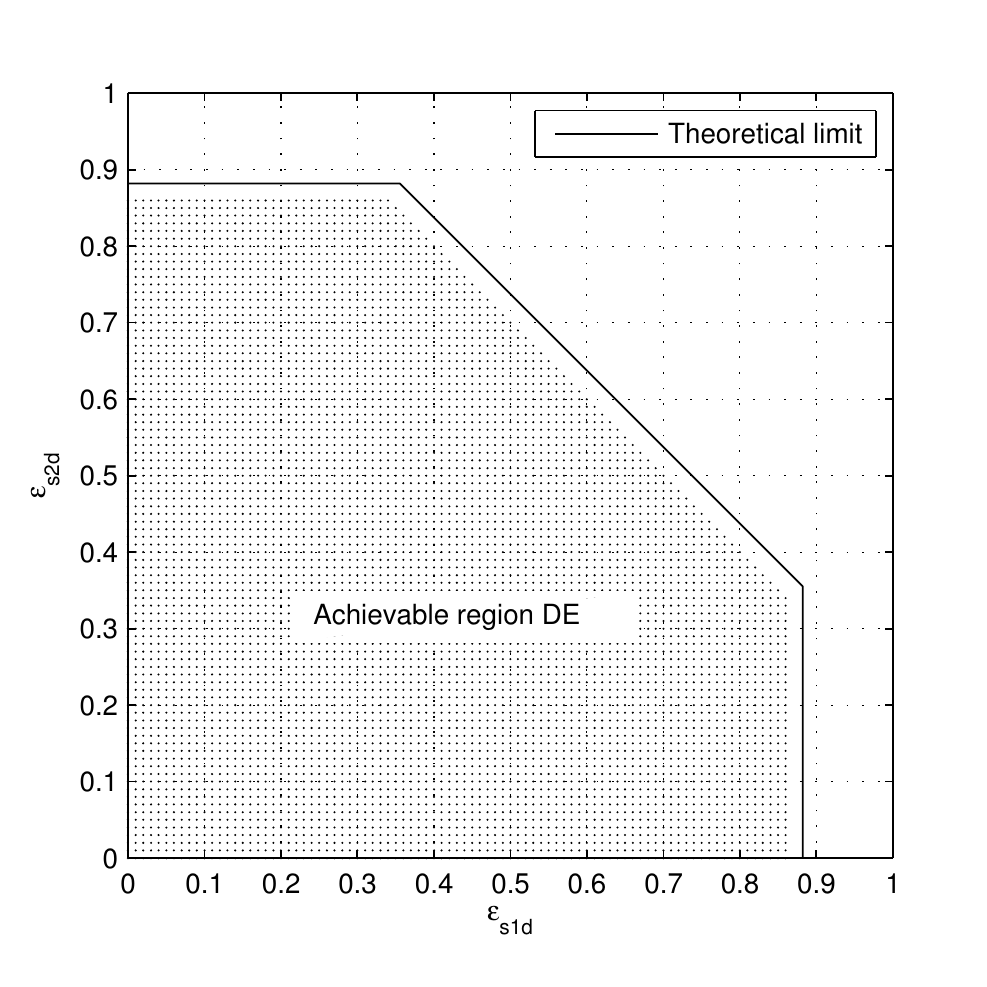}
\label{fig:achRegionCodeACfBer}
}
\vspace{-1ex}
\caption{a) Achievable $(\epsoned,\epstwod)$-region and theoretical limit for Code A, $p=0$ (no correlation). b) Achievable $(\epsoner,\epstwor)$-region and theoretical limit for Code A, $p=0$. c) Achievable $(\epsoned,\epstwod)$-region and theoretical limit for Code A, $p=0.3$.}
\label{fig:deCorrGen}
\vspace{-3ex}
\end{figure*}
\begin{figure*}
\subfloat[][]{
\vspace{-1ex}
\includegraphics[scale=0.6]{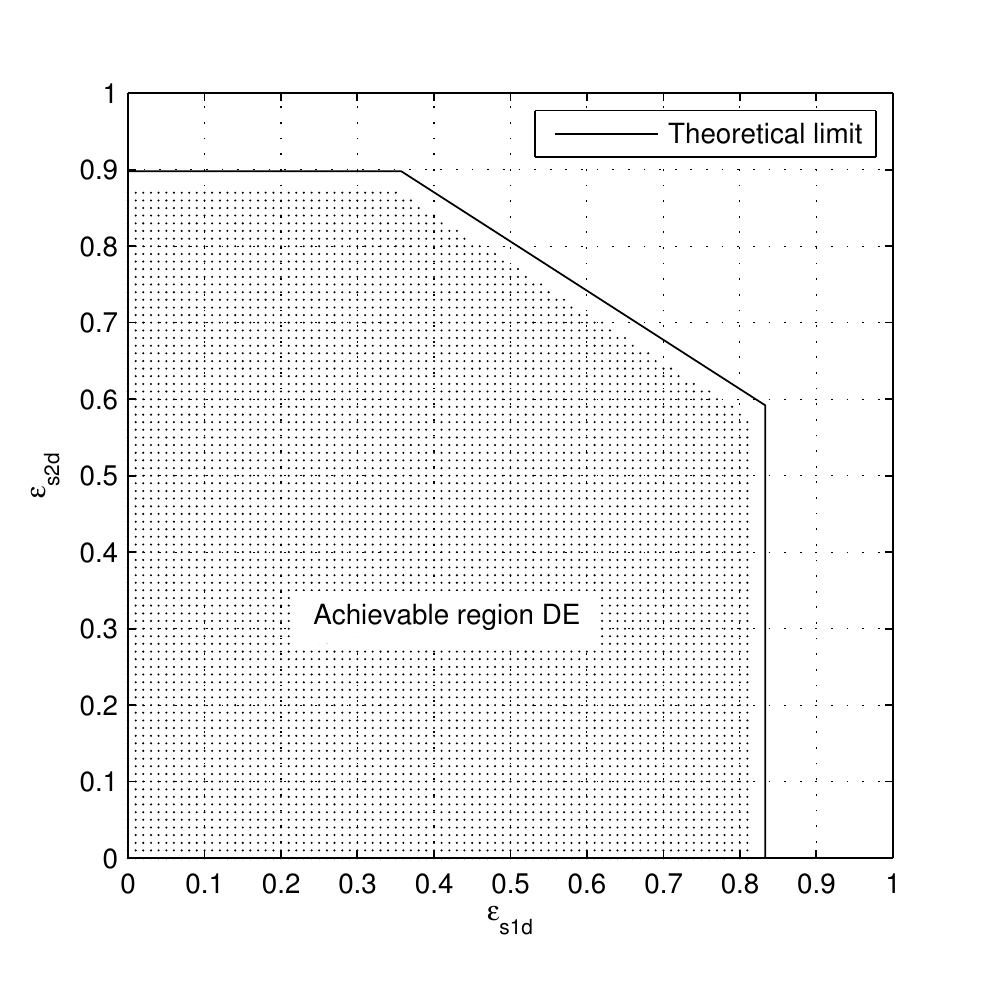}
}
\centering
\subfloat[][]{
\vspace{-1ex}
\includegraphics[scale=0.6]{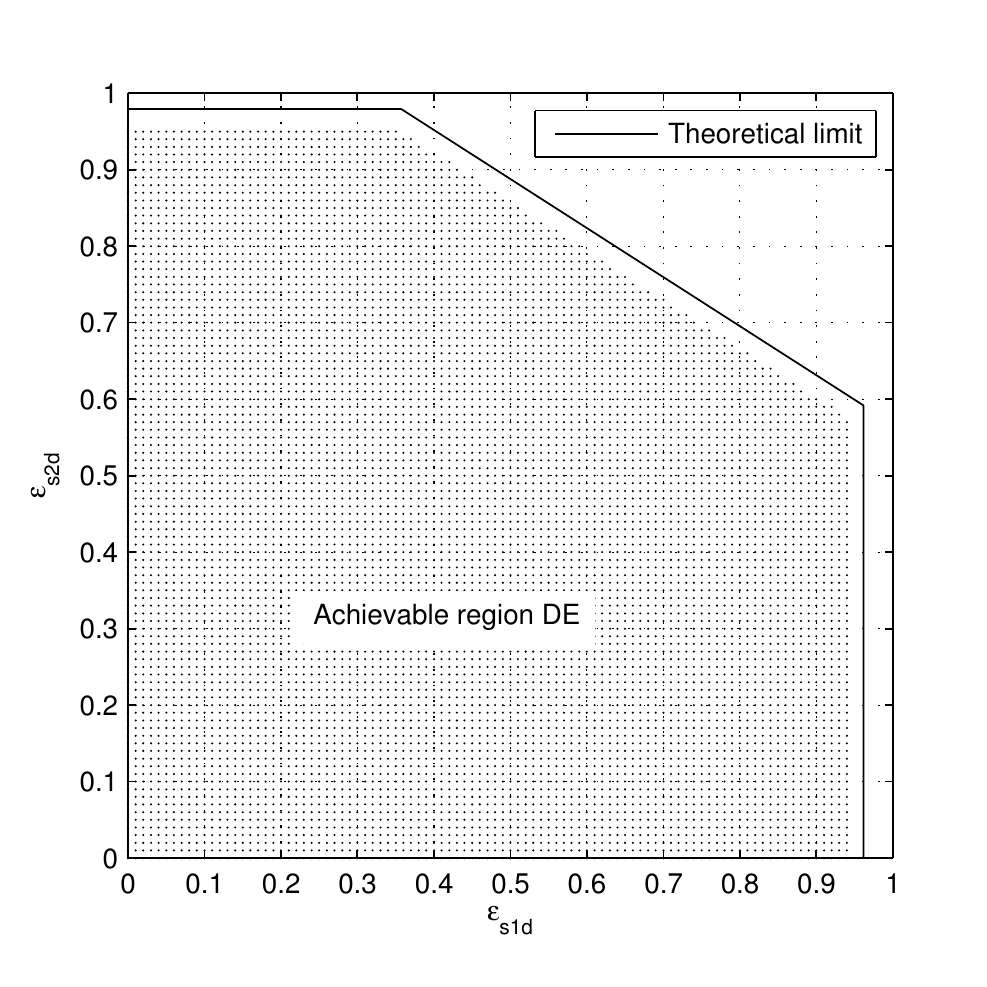}
}
\subfloat[][]{
\includegraphics[scale=0.6]{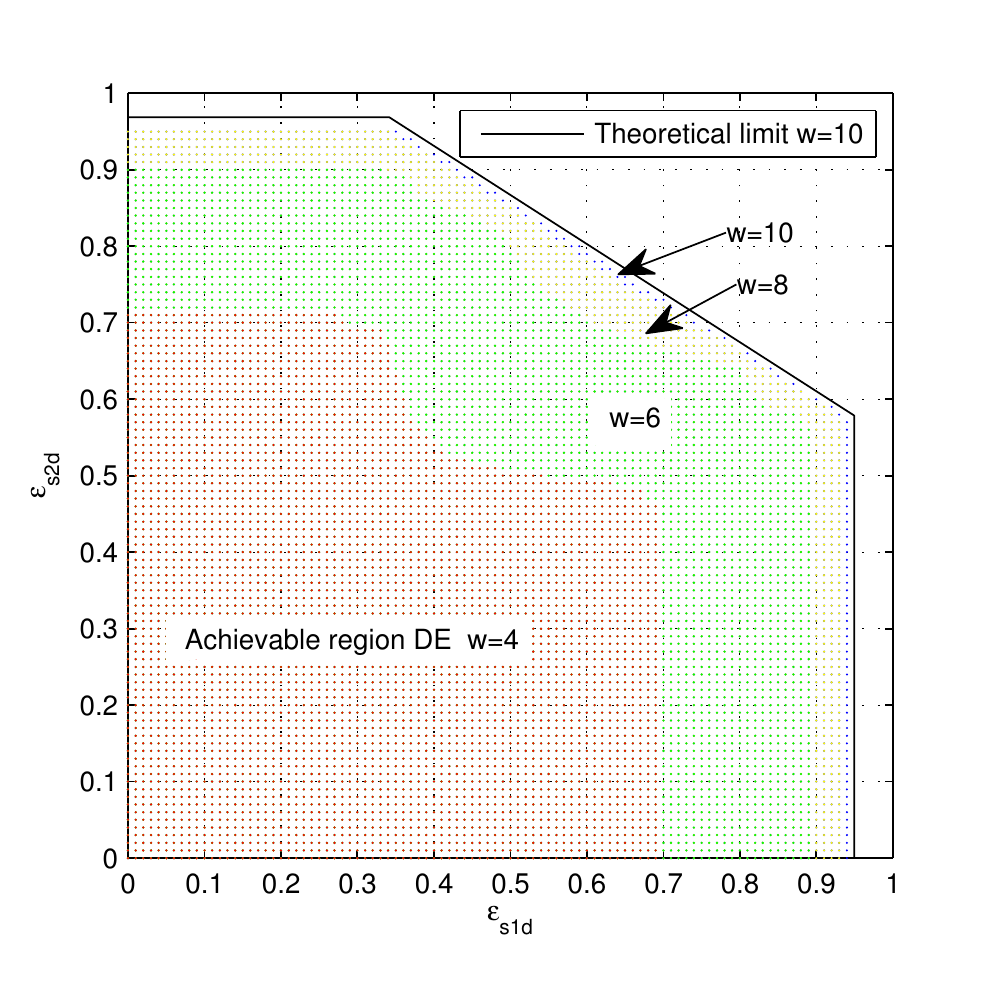}
}
\vspace{-1ex}
\caption{Achievable $(\epsoned,\epstwod)$-regions and theoretical limits for Code B. a) $p=0$, $w=10$. b) $p=0.2$, $w=10$. c) $p=0.2$, $w=4,6,8$ and $10$.}
\label{fig:deCorrGenC}
\vspace{-3ex}
\end{figure*}

Fig.~\ref{fig:2dexitaa}(a) shows the EXIT surface for source $\src_1$ for Code A (see Table~\ref{tbl:codes} for code parameters), correlation parameter $p=0.3$, and a chain length of $L=10$. The region where $h^{(1)}(\epsoned,\epstwod)\equiv0$ corresponds to the channel parameters $(\epsoned,\epstwod)$ for which error free decoding of $\src_1$ is possible. In the other region, $h^{(1)}(\epsoned,\epstwod)>0$. Fig.~\ref{fig:2dexitaa}(b) shows the same scenario for $L=20$. The transition from the error-free region to the region where $h^{(1)}$ is close to one is steeper. The threshold saturation phenomenon is already observed in Fig.~\ref{fig:2dexitaa}(c) for $L=300$. In this case the transition is very steep. Increasing the chain length further results in very small changes of the shape of the surface, denoting threshold saturation. While threshold saturation to the MAP threshold holds for the uncorrelated case with symmetric channel conditions, we remark that we cannot claim threshold saturation to the MAP threshold for the general case. However, the results in Fig.~\ref{fig:2dexitaa}(c), suggest that threshold saturation also holds for this case.

To compare the performance predicted by DE with the theoretical limits, we define the achievable $(\epsoned,\epstwod)$-region $\mathcal{A}$ as the region where both users can be decoded successfully at the destination,
\begin{equation*}
\mathcal{A}=\{(\epsoned,\epstwod)|h^{(1)}\equiv0 \text{ and } h^{(2)}\equiv0\}.
\end{equation*}

Fig.~\ref{fig:deCorrGen}(a) shows $\mathcal{A}$ (dotted area) for Code A, for the case where the sources are uncorrelated ($p=0$). The figure also depicts the theoretical limit given by \eqref{eq:achRegionPunct1}-\eqref{eq:achRegionPunct3} (black line). The proposed bilayer codes are near universal for the region $(\epsoned,\epstwod)$, i.e., they perform close to the theoretical limit for all channel conditions. We observe a largely uniform gap between the DE result and the theoretical limit of less than $0.02$. It can also be observed that a lower link quality for one user is compensated by a better quality on the other link. Note that this is possible because the data of both sources is combined via network coding at the relay and joint decoding at the destination. We can plot a similar figure for the achievable $(\epsoner,\epstwor)$-region (Fig.~~\ref{fig:deCorrGen}(b)), obtaining a similar behavior than for the region $\mathcal{A}$. Note that both sources can be decoded successfully both at the destination and at the relay for all pairs $(\epsoned,\epstwod)$ and $(\epsoner,\epstwor)$ within the dotted areas in Fig.~\ref{fig:deCorrGen}(a) and Fig.~\ref{fig:deCorrGen}(b), respectively.

In Fig.~\ref{fig:deCorrGen}(c) we plot $\mathcal{A}$ and the theoretical region for Code A and $p=0.3$. Compared to the uncorrelated case, the theoretical region as well as $\mathcal{A}$ become bigger. Now, the additional parity-checks due to the correlation model in the factor graph of the decoder allow successful decoding at higher channel erasure probabilities.
%


In Figs.~\ref{fig:deCorrGenC}(a) and~\ref{fig:deCorrGenC}(b), we plot $\mathcal{A}$ and the theoretical region for Code B for $p=0$ and $p=0.2$, respectively. As expected, both $\mathcal{A}$ and the theoretical region are now non-symmetric because of its dependency on $R_1$ and $R_2$. The DE results show again near universal performance. However, for both $p=0$ and $p=0.2$ we can observe a slightly larger gap to the theoretical limit on the side with the better limit ($y$-axis). When all parameters go to infinity, we would expect the whole theoretical region to be achieved.

We also analyzed the performance of the proposed bilayer codes as a function of the code parameters $w$, $l$ and $r$. In Fig.~\ref{fig:deCorrGenC}(c) we plot $\mathcal{A}$ for Code B and several values of $w$, $w=4,6,8$ and $10$, for $p=0.2$. (We remark that the theoretical region is slightly different for different values of $w$, as the resulting coding rates --which depend on $w$-- are slightly different. However, the four regions are very similar and, for the sake of clarity, in the figure we only plot the theoretical region corresponding to $w=10$). It is interesting to see that for $w=4$ and $6$ the achievable rate region is significantly smaller than the theoretical region. We recall that the theoretical limit is achieved by letting $w\rightarrow\infty$. In practice, for $w\geq 8$, near universal performance is obtained. A similar behavior is observed for the uncorrelated case. On the other hand, we observed that varying $l$ and $r$, while keeping constant $w$, leads to similar achievable regions. This seems to indicate that the impact of parameter $w$ on the performance is larger than that of the variable and check node degrees.

Finally, in Fig.~\ref{fig:berResults} we give bit erasure rates for Code A with $M=1800$, as a function of $\epsoned$. The correlation is $p=0.3$. The blue curve with triangles shows the performance for $\epsoned=\epstwod$, the black curve with circles for fixed $\epstwod=0.4$ and the red curve with squares for $\epstwod=0$. The corresponding DE thresholds and theoretical limits (cf. Fig.~\ref{fig:deCorrGen}(c)) are also given. We observe a performance close to the DE thresholds (note that the DE results require $M\rightarrow\infty$).

\section{Conclusions and Discussion}
\label{sec:conclusions}

In this work, we presented a bilayer spatially-coupled low-density parity-check code design for a system where two correlated sources transmit their data to a common destination with the help of a relay over BECs. The scheme combines joint source channel coding at the sources together with joint detection at both the relay and the destination to exploit the correlation. The relay employs network coding in order to enable joint decoding of both sources at the destination even in the non-correlated case, thereby improving the robustness of the system. We derived theoretical bounds on the achievable system rate, based on separate source and channel coding. Using density evolution, we showed that the proposed scheme approaches the theoretical limit for the general case of correlated sources and non-symmetric channel conditions. For the particular case of uncorrelated sources and symmetric channel conditions we proved analytically that our code construction achieves the heoretical limit. 

The use of BECs in this paper is motivated by the fact that it simplifies the DE analysis and allows to derive relevant proofs. In a more practical scenario, one should consider other channels, such as the Gaussian MARC. While most of the literature on SC-LDPC codes, and in particular those focused on its analysis, consider the BEC, we remark that it has been shown that SC-LDPC codes also perform well for other channels. For instance, for the 2-user Gaussian MAC channel, SC-LDPC codes have been shown to approach capacity \cite{yedla2011universality}. Therefore, we expect the codes proposed in this paper to perform close to the theoretical limit for the Gaussian MARC.

\begin{figure}
\begin{center}
\includegraphics[scale=0.7]{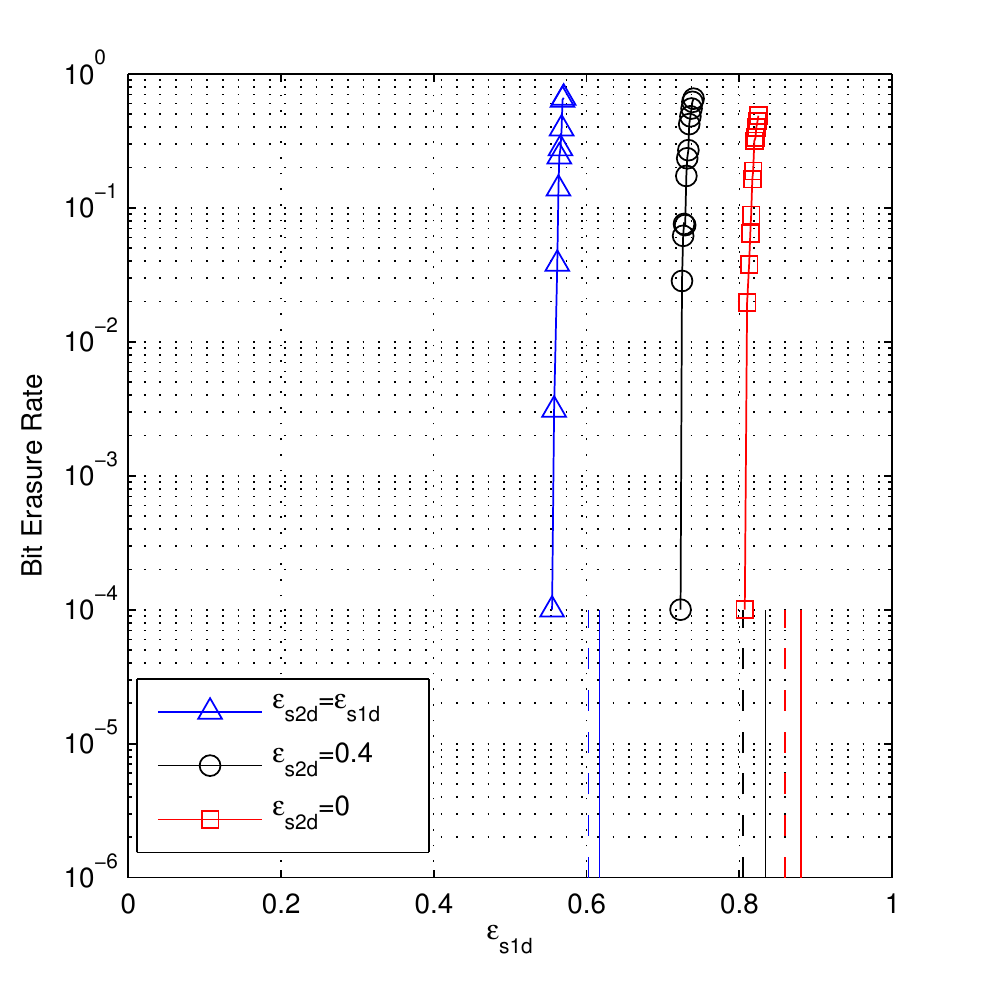}
\caption{Bit erasure rates (solid curves with markers), DE thresholds (dashed lines) and theoretical limits (solid lines)  for Code A, $p=0.3$.}
\label{fig:berResults}
\end{center}
\end{figure}
An alternative to SC-LDPC codes would be use of irregular LDPC codes. In \cite{yedla2010ldpc}, irregular LDPC codes are optimized for the transmission of two correlated sources over BECs. It is shown that, while performance close to the theoretical limit can be achieved for a range of channel parameters, for balanced channel parameters (i.e., similar erasure probabilities in the links) the codes perform away from the theoretical limit. As a result, the codes are not \textit{universal}. A similar observation was made in \cite{martalo2010density}. Note that transmission from the sources to the relay corresponds to the scenario in \cite{martalo2010density,yedla2010ldpc}. Therefore, it is reasonable to conjecture that irregular LDPC codes are not universal for the (more general) relaying scenario considered in this paper. In contrast, our DE results show that bilayer SC-LDPC code ensembles are near universal for the two-source relay channel (i.e., they perform close to the theoretical limit for the whole regions $(\epsoned,\epstwod)$ and $(\epsoner,\epstwor)$). On the other hand, we would like to remark that irregular LDPC codes require a degree distribution optimization using, e.g., differential evolution, which is computationally expensive. In the scenario considered in this paper, changing the channel conditions (i.e., the erasure probabilities $\epsoner$, $\epstwor$, $\epsoned$, and $\epstwod$) would not only have and impact on the code rates of each level, but it also requires a complete degree distribution optimization. This is not required for SC-LDPC codes, which makes the code design much simpler.

\subsection{Extension to $n$ sources}
The proposed bilayer SC-LDPC code can be extended to the $n$-source scenario. In this case, we can construct a bilayer code with parity check matrix
\begin{equation*}
\mathbf{H} = \left[\begin{array}{cccc}\mathbf{H}_1 &   \mathbf{0} &   \cdots &   \mathbf{0}\\  \mathbf{0}&\mathbf{H}_2 &  \cdots &\mathbf{0}\\
\vdots  &\mathbf{0} &  \ddots & \mathbf{0}\\
\mathbf{0}  &\cdots &  \mathbf{0} &\mathbf{H}_n\\
\Hsynd^1 &   \Hsynd^2 & \cdots &   \Hsynd^n \\
\H_\corr^1 &  \H_\corr^2 & \cdots &  \H_\corr^n \end{array}\right].
\label{eq:nsource}
\end{equation*}

For the uncorrelated case, the theoretical limits for the $n$-source scenario can be found in \cite{youssef2011distributed} and the code rates should be optimized according to these. We would like to remark that for the uncorrelated case and symmetric channel conditions (in this case we can set $l_i=l$, $r_i=r$, $l_{\synd}^i=l_{\synd}$ and $r_\synd^i=r_\synd$, for $i=1,\ldots,n$), as for the two-source case, the DE equation for each user (with $r_\synd=r/n$) can be written again as the DE equation for a single layer SC-LDPC code ensemble, and the proof in Section~\ref{sec:ProofUncSources} can be easily extended to the $n$-source case.


\appendix[Proof of the achievable DF rate for the TD-MARC with correlated sources]
\label{sec:appendixDfRate}

To obtain the maximum $\Reff$, we have to optimize the time allocation given by $\theta_1$, $\theta_2$ and $\theta_r$, and the source coding rates $\Rsrcone$ and $\Rsrctwo$. For this, we have to solve the following optimization problem 
\begin{align*}
\text{maximize} \quad &\min \{(f_1,\ldots,f_5)(\theta_1,\theta_2,\Rsrcone)\} \\
\text{s.t.}\quad & H(U_1|U_2) \le \Rsrcone \le H(U_1), \\
           & \Rsrctwo=H(U_1,U_2)-\Rsrcone, \\
            & \theta_r=1-\theta_1-\theta_2, \\ 
		    & 0 \le \theta_1,\theta_2 \le 1, \\
		    & \Csir \ge \Csid, \\
		    & \Crd  \ge \Csid.
\end{align*}
The functions $f_1,\ldots, f_5$ are defined in \eqref{eq:sr1x}-\eqref{eq:sumConstx}. Let $\theta_1^*$, $\theta_2^*$ and $\Rsrcone^*$ be the values for which $R'$ is maximized. To solve this optimization problem, we assume that out of the five inequalities \eqref{eq:sr1x}-\eqref{eq:sumConstx}, only three are active at the optimum $(\theta_1^*,\theta_2^*,\Rsrcone^*)$, namely \eqref{eq:sr1x}, \eqref{eq:sr2x} and \eqref{eq:sumConstx}. The optimum corresponds to the point where the three hyperplanes described by the three inequalities (taken as equalities) have their intersection. Then we will show that the other two constraints are not active at the optimum.

First we set $f_1(\theta_1,\Rsrcone)=f_5(\theta_1,\theta_2)$. This gives the relation
\begin{equation}
\theta_2^* = \theta_1^* \frac{\Csoner}{\Cstwor}\left(\frac{H(U_1,U_2)}{\Rsrcone}-1\right) = \kappa'(\Rsrcone)\theta_1^*\label{eq:theta_2_theta_1},
\end{equation}
where
$\kappa'=\kappa\left(\frac{H(U_1,U_2)}{\Rsrcone}-1\right)$. 
Then we set $f_2(\theta_2,\Rsrcone)=f_5(\theta_1,\theta_2)$ and insert \eqref{eq:theta_2_theta_1} for $\theta_2$, obtaining
\begin{equation}
\theta_1^*(\Rsrcone) = \frac{\Crd}{\left(1+\kappa'\right) \Crd + \frac{H(U_1,U_2)}{\Rsrcone}\Csoner -\Csoned - \kappa'\Cstwod}\label{eq:theta1_opt}.
\end{equation}


Using \eqref{eq:sr1x} the resulting achievable rate is
\begin{equation}
\label{eq:achRatea}
R'(\Rsrcone) = \frac{1}{\Rsrcone}\theta_1^* \Csoner.
\end{equation}
We still have to maximize this expression over $H(U_1|U_2)\le\Rsrcone\le 1$ to obtain the maximal achievable rate. The behavior of this function is determined by the value of $\alpha=(1-\kappa)\Crd-\Csoned+\kappa\Cstwod$. For $\alpha=0$, i.e., $\frac{\Csoner}{\Cstwor}=\frac{\Crd-\Csoned}{\Crd-\Cstwod}$, (\ref{eq:achRatea}) does not depend on $\Rsrcone$. For $\alpha>0$, it is monotonically decreasing in $\Rsrcone$ and therefore the maximum $\Reff$ is achieved for
\begin{equation*}
\Rsrcone^*=H(U_1|U_2), \qquad \Rsrctwo^*=H(U_2) = 1. 
\end{equation*}
For $\alpha<0$, it is monotonically increasing, and therefore the optimum choice is $\Rsrcone^*=1$ and $\Rsrctwo^*=H(U_2|U_1)$. This result tells us that, in general, one source should be maximally compressed while the other should not be compressed at all. Only in  the special case $\alpha=0$ the choice of $\Rsrcone$ can be made arbitrarily within the constraint 
\eqref{eq:rs2ofrs1} without affecting the effective transmission rate. 

We still have to show that inequalities \eqref{eq:srd1x} and \eqref{eq:srd2x} are fulfilled at the optimum $(\theta_1^*$, $\theta_2^*$ and $\Rsrcone^*$). First we show that the right-hand side of \eqref{eq:sr1x} is smaller than the right-hand side of \eqref{eq:srd1x}, i.e.,  $f_1(\theta_1^*,\Rsrcone^*)\le f_3(\theta_1^*,\theta_2^*,\Rsrcone^*)$. From $f_1(\theta_1^*,\Rsrcone^*)=f_5(\theta_1^*,\theta_2^*)$
we obtain
\begin{equation}
(1-\theta_1^*-\theta_2^*)\Crd=\frac{H(U_1,U_2)}{\Rsrcone}\theta_1^*\Csoner-\theta_1^*\Csoned-\theta_2^*\Cstwod.\label{eq:theta_r_opt}
\end{equation}
Inserting this into \eqref{eq:srd1x} gives
\begin{equation*}
\frac{\theta_1^*\Csoner}{\Rsrcone^*}\left( \frac{H(U_1,U_2)}{\Rsrcone^*}-\left(\frac{H(U_1,U_2)}{\Rsrcone^*}-1\right)\frac{\Cstwod}{\Cstwor}\right)\ge  \frac{\theta_1^*\Csoner}{\Rsrcone^*},
\end{equation*}
since $\frac{H(U_1,U_2)}{\Rsrcone}\ge  1$ and $\Cstwor\ge\Cstwod$. 
We can show in a similar way that inequality \eqref{eq:srd2x} is loose by comparing it to \eqref{eq:sr2x}.

\balance

\end{document}